\documentclass[letterpaper, 10 pt, conference]{ieeeconf}  % Comment this line out if you need a4paper
\usepackage{comment}
\usepackage{cite} 

\usepackage{enumitem}

\usepackage{amssymb, mathbbol, mathtools, amsthm} 
\usepackage{graphicx}
\usepackage{csquotes} % for quotation marks 
\usepackage{hyperref} % for hyperlinks
\usepackage{xcolor} % for colorful font
\usepackage{textcomp}
\usepackage[export]{adjustbox}
\usepackage{subcaption}
\usepackage{setspace}
\usepackage{float}
\usepackage{array}
\usepackage{tikz}
\usepackage{graphics} % for pdf, bitmapped graphics files
\usepackage{epsfig} % for postscript graphics files
\usepackage{caption}
\captionsetup{justification=centering}
\usepackage{mathptmx} 
\usepackage{times} % assumes new font selection 
\IEEEoverridecommandlockouts                              % This command is only needed if 
                                                          % you want to use the \thanks command

\overrideIEEEmargins                                      % Needed to meet printer requirements.
% Define theorem environments
\newtheorem{theorem}{Theorem}
\newtheorem{lemma}{Lemma}
\theoremstyle{definition}
\newtheorem{definition}{Definition}
\newtheorem{example}{Example}
\theoremstyle{corollary}
\newtheorem{corollary}{Corollary}
\title{\LARGE \bf Are the flows of complex-valued Laplacians and their pseudoinverses related?}
\author{Aditi Saxena$^{1}$ \hspace{30 mm} Twinkle Tripathy$^{2}$\hspace{30 mm} Rajasekhar Anguluri$^{3}$
%\thanks{*This work was not supported by any organization.}% <-this % stops a space
\thanks{$^{1}$Aditi Saxena is a research scholar with the Department of Electrical Engineering, Indian Institute of Technology Kanpur, India. (email: {\tt\small saditi23@iitk.ac.in)}}%
\thanks{$^{2}$Twinkle Tripathy is with the Department of Electrical Engineering, Indian Institute of Technology Kanpur, India. (email: 
{\tt\small ttripathy@iitk.ac.in).}}%
      % {\tt\small b.d.researcher@ieee.org}}%
\thanks{$^{3}$Rajasekhar Anguluri is with the Department of Computer Science and Electrical Engineering,  University of Maryland, Baltimore County, MD 85281, USA. (e-mail: {\tt\small rajangul@umbc.edu}).}
        }
\begin{document}
\maketitle
\thispagestyle{empty}
\pagestyle{empty}
\begin{abstract}
Laplacian flows model the rate of change of each node's state as being proportional to the difference between its value and that of its neighbors. Typically, these flows capture diffusion or synchronization dynamics and are well-studied. Expanding on these classical flows, we introduce a \emph{pseudoinverse Laplacian flow system}, substituting the Laplacian with its pseudoinverse within complex-valued networks. Interestingly, for undirected graphs and unsigned weight-balanced digraphs, Laplacian and the pseudoinverse Laplacian flows exhibit an interdependence in terms of consensus. To show this relation, we first present the conditions for achieving consensus in the \emph{pseudoinverse Laplacian flow system} using the property of real eventually exponentially positivity. Thereafter, we show that the pseudoinverse Laplacian flow system converges to consensus if and only if the Laplacian flow system achieves consensus in the above-mentioned networks. However, these are only the sufficient conditions for digraphs. Further, we illustrate the efficacy of the proposed approach through examples, focusing primarily on power networks. 
\end{abstract}
\section{INTRODUCTION}
Network science examines interconnected systems, where nodes represent entities and edges represent interactions. A key focus is on complex networks, which find applications in areas such as social, biological, and technological domains. In these applications, one is often interested in understanding the behavior of nodes (or agents) over time. This behavior is captured by Laplacian flows, where the evolution of node’s state is influenced by its interactions with neighboring nodes, providing insights into insights into dynamics like diffusion, synchronization, and influence propagation \cite{gates2019nature,dorfler2012synchronization,olfati2007consensus}.%\rajmargin{could you cite some high-level journals, that highlight these claims we are stating?--done} 

On the other hand, consensus algorithms focus on enabling a group of agents to reach agreement on a common value or state. Thus, these algorithms can be viewed as a specific case of Laplacian flows, where the Laplacian matrix governs the state evolution attains consensus. Over the last two decades, many important results for consensus algorithms were established by exploring algebraic, spectral, and combinatorial properties of the Laplacian matrices. For studies in this direction and beyond, see \cite{degroot1974reaching,bullo2018lectures,hu2024best,zhang2020consensus,cetinay2016topological,romeres2013novel,de2018power}.
Despite the extensive research on Laplacian flows and their key role in consensus studies, most existing work focuses on ``real-valued" Laplacian matrices. While effective in many applications, such assumption can be restrictive for emergin fields like distribution power networks, quantum dynamics, electrodynamics, and machine learning \cite{muranova2020electrical,song2019extension,liu2015hermitian,bottcher2024complex}. As highlighted in \cite{bottcher2024complex}, the Laplacian matrix in these contexts is complex-valued. We refer to such systems as complex-valued networks to distinguish them from complex networks. The dedicated session on complex-valued networks at SIAM’s 2024 Mathematics of Data Science conference underscores the growing importance of this emerging field.

 % Hence, significant literature studies about the algorithms for achieving consensus in networks with real-valued edge-weights. However, networks with complex edge-weights (\emph{complex-valued networks} hereafter) arise in certain applications like power networks, quantum information, social networks, electrodynamics and machine learning \cite{muranova2020electrical,song2019extension,liu2015hermitian,bottcher2024complex}.
Distributed control algorithms for complex networks often employ Laplacian dynamics to study consensus in social networks \cite{olfati2007consensus}. 
Some works exploring Laplacian dynamics in complex-valued networks include \cite{bottcher2024complex,dong2014complex,DONG20161,tian2024spreading,saxena2024realeventualexponentialpositivity}. In \cite{dong2014complex}, the authors discuss modulus consensus using the concept of structural balance under the assumption of Hermitian adjacency matrices. 
%In \cite{DONG20161}, the authors provide necessary and sufficient conditions for the Laplacian matrix to be singular, based on the topology of a graph in graphs. 
Further, a spectral clustering algorithm in signed networks is proposed in \cite{tian2024spreading} employing structural balance and anti-balance notions for Hermitian weight matrices. Few works employing eventually exponentially positivity property in the analysis of real-valued Laplacian flows are \cite{fontan2021properties,fontan2023pseudoinverses,altafini2019investigating}. %The authors use the former property in conventional (real-valued) networks.%\rajmargin{This paragraph in blue should be re-written exclusively for complex-valued networks.--ok}

In contrast to the existing literature, we study the relationship between complex-valued Laplacians and their pseudoinverses in various types of network, such as signed and directed networks. We compute the Moore-Penrose pseudoinverse of the Laplacian matrix, corresponding to the network, to obtain the pseudoinverse Laplacian flows. We consider unsigned networks which find applications in power networks and non-uniform Kuramoto oscillators. To the best of our knowledge, we are the first to introduce and study pseudoinverse flow systems. Below, we summarize our key contributions:
%we propose an alternative agreement protocol using the pseudoinverses of Laplacian matrices.  \cite{bottcher2024complex,dorfler2012synchronization}. 
% In \cite{dorfler2012synchronization}, the authors state that in power networks with passive admittances, the self-admittance of a generator has non-negative real and imaginary parts.
%The main contributions of the paper are listed below: 
\begin{enumerate}
\item  Using the real eventually exponentially positivity property (rEEP) introduced in  \cite{saxena2024realeventualexponentialpositivity}, we establish the necessary and sufficient conditions for achieving consensus in both complex-valued Laplacian and pseudoinverse Laplacian flows. We focus on cooperative and antagonistic (i.e., symmetric) complex-valued networks and weight-balanced digraphs. 
\item We establish that the property of rEEP is equivalent to the marginal stability or semi-convergence of the Laplacian, as well as the pseudoinverse Laplacian flows. We use complex Perron-Frobenius theory to establish the real eventually exponential positivity of the matrices. %\rajmargin{What is the use of the blue highlighted part in paragraph? Please highlight your contributions, not your motivations!--ok}\
%\rajmargin{why is that computing pseduo inverse a contribution?--I mentioned it to clarify how we calculate pseudoinverse flows corresponding to a graph}
\end{enumerate}
 Finally, we validate our mathematical findings on synthetic and IEEE benchmark power networks. In addition, throughout the paper we demonstrate our theoretical results using several toy network systems. 
%Although the diagonal elements of the matrix exponential of the Laplacian matrix can be used to evaluate the importance of node based on the number of closed walks in a network, we identify this as a potential direction for future research.
%\emph{Related}
\section{PRELIMINARIES}\label{sec:preliminaries}
\subsection{Matrix Algebra}
We denote the space of $n\times n$ complex-valued matrices by $\mathbb{C}^{n \times n}$. For any $m$ (scalar or matrix), ${\Re}(m)$ and ${\Im}(m)$ indicates the real and imaginary part of $m$ respectively. A matrix $M:=[m_{ij}] \in  \mathbb{C}^{n \times n}$ is non-negative (positive) if ${\Re}(m_{ij})\geq 0$ and ${\Im}(m_{ij})\geq 0$ (respectively ${\Re}(m_{ij})> 0 \text{ and }{\Im}(m_{ij}> 0)$) $\forall i,j \in \left \{ 1,2,...,n \right \}$. A matrix is real non-negative (positive), that is, $\Re(M)\geq0~ (\Re(M)>0 $) if $\Re{(m_{ij})}\geq 0$. $M$ is real eventually non-negative (positive) if $\Re(M^{k})\geq 0~ (\Re(M^{k})>0)$ holds for some $ k\geq k_{0}$. $M^{H}$ is the conjugate transpose or adjoint of $M$. $M$ is self-adjoint if $M=M^H$ holds. The spectrum of $M$ is denoted by $\operatorname{spec}(M)=\left \{ \lambda_{1},\lambda_{2},...,\lambda_{n} \right \}$, where $\lambda_{i}$ is the $i^{th}$ eigenvalue of $M$. The symmetric part of $M$ is denoted as $M_s$. Any symmetric matrix $M$ is positive semi-definite $(psd)$ if all eigenvalues are in the closed right-half plane (RHP). The spectral radius of a matrix $M$ is the smallest non-negative real number $\rho(M)\geq |\lambda_{j}(M)|$ for all $\lambda_{j} \in \operatorname{spec}(M)$. Corank of $M$ is defined as the dimension of its null space \cite{horn2012}.
\subsection{Graph Theory}
\looseness=-1
Consider the weighted directed graph $\mathcal{G}(A)=(\mathcal{V},\mathcal{E},A)$, where $\mathcal{V}$ is the set of vertices, $\mathcal{E}$ is the set of edges, and $A=[a_{ij}] \in  \mathbb{C}^{n \times n}$ is the adjacency matrix. If $a_{ij}\geq 0$, then $\mathcal{G}(A)$ corresponds to an unsigned network. Otherwise, it is a signed network. In mathematical sociology, the positive edges in a network represent friendship, while the negative edges indicate enmity between individuals. 
%When the adjacency matrix consists of non-negative edge weights then $A$ corresponds to an unsigned network. 
%Degree matrices are diagonal matrices with degrees of the nodes as their diagonal elements. In case of digraphs, out-degree of a node is the sum of edge weights of all the outgoing edges. 
Degree matrices are diagonal matrices where each diagonal entry represents the degree of a corresponding node. For digraphs, the out-degree of a node is the sum of the weights of all its outgoing edges.
Out-degree matrix is given by $D_{out} = diag(A \mathbb{1}_n)$. The Laplacian matrix $L$ is defined by $L=D_{out}-A$. The Moore-Penrose pseudoinverse (hereafter Laplacian pseudoinverse) of $L$ is denoted by $L^{\dagger}$. 
A digraph (or undirected graph) $\mathcal{G}(A)$ is said to be strongly connected (or connected),  if any node can be reached from any other node. A digraph is weight-balanced if each node’s in-degree equals its out-degree.
$\mathcal{G}(A)$ is said to be unsigned when all its edge-weights are non-negative. 
%\rajmargin{I thought digraph means directed graph.}
%\rajmargin{See, the third word in the paragraph above Definition 1}
\begin{definition}\label{def:strong PF}(\textbf{\textit{strong complex  PF property\cite{varga2012}}}) A matrix \( M \in \mathbb{C}^{n \times n} \) follows the strong complex Perron-Frobenius (PF) property if its
%it possesses a 
dominant eigenvalue \( \lambda_1 \) is positive, simple, and satisfies \( \lambda_1 > |\lambda_j| \) for all \( j \in \{ 2, 3, \dots, n \} \). Additionally, the corresponding right eigenvector satisfies \( \Re(\mathbf{x}) > \mathbf{0} \).
\end{definition}
\begin{definition}\label{def:real EEP}(\textbf{\textit{real eventually exponentially positive matrix\cite{saxena2024realeventualexponentialpositivity}}}) A matrix $M \in \mathbb{C}^{n\times n}$ is real eventually exponentially positive (rEEP) if there exists $t_{0}$ such that $\Re(e^{Mt})>0$ for all time $t\geq t_{0}$.
 \end{definition}
\begin{definition}\label{def:pseudoinverse}
%\rajmargin{have you defined self-adjoint?}
%A matrix $M^{\dagger} \in \mathbb{C}^{n\times m}$ is said to be 
The pseudoinverse (in the sense of Moore-Penrose) of $M \in \mathbb{C}^{n\times m}$ is denoted by $M^{\dagger}$ and it satisfies \cite{2012moore}: 
\begin{enumerate}[label=(\roman*)]
     \item  $M M^{\dagger}M=M$,\label{eq1:Linv}
     \item $M^{\dagger}M M^{\dagger}=M^{\dagger}$, \label{eq2:Linv}
     \item  $M M^{\dagger} \in\mathbb{C}^{m \times m}$ and $M^{\dagger}M \in \mathbb{C}^{n \times n}$ are self-adjoint. \label{eq3:Linv} 
 \end{enumerate}\end{definition}

We cull some important results (without proof) from our prior work \cite{saxena2024realeventualexponentialpositivity}. First, consider the set of matrices
\begin{align}\label{eq:Q}
    \mathcal{O}:=\{O\in\mathbb{C}^{n\times n};\Re(\mathbf{x})\geq\left | \Im(\mathbf{x}) \right |  , \Re(\mathbf{z})\geq\left | \Im(\mathbf{z}) \right |  \}  
\end{align}
and every $O \in \mathcal{O}$ follows strong complex PF Property.
\begin{theorem}\label{stronglemma}
Consider $M$ and $M^{H} \in \mathbb{C}^{n \times n}$ such that $\Re(\mathbf{x})\geq\left | \Im(\mathbf{x}) \right |, \Re(\mathbf{z})\geq\left | \Im(\mathbf{z}) \right | $ holds where $\mathbf{x}$ and $\mathbf{z}$ are the dominant right and left eigenvectors of $M$, respectively. Then, the following statements are equivalent for some scalar $d\geq 0$: 
%\rajmargin{$\mathcal{Q}$ appears weird. Could you try something?}
\begin{enumerate}[label=(\roman*)]
\item $ (M+dI) \text{ and } (M+dI)^{H} \in \mathcal{O}$,
\item $(M+dI)$ is real eventually positive for all $k\geq k_{0}$,
\item $M$ is rEEP.
\end{enumerate}  
where $\mathcal{O}$ is as defined in Eqn.\eqref{eq:Q}.
\end{theorem}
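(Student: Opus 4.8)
The plan is to prove the three-way equivalence by establishing a cycle of implications $(i)\Rightarrow(ii)\Rightarrow(iii)\Rightarrow(i)$, leaning on the complex Perron--Frobenius machinery packaged in Definition~\ref{def:strong PF} and the set $\mathcal{O}$ defined in Eqn.~\eqref{eq:Q}. The hypothesis that $\Re(\mathbf{x})\geq|\Im(\mathbf{x})|$ and $\Re(\mathbf{z})\geq|\Im(\mathbf{z})|$ for the dominant eigenvectors is precisely the membership condition for $\mathcal{O}$, so the shift by $dI$ is the device that converts a statement about the exponential $e^{Mt}$ into a statement about powers of a nonnegative-spectral-radius matrix. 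First I would record the elementary spectral fact that shifting $M$ by $dI$ translates every eigenvalue by $d$ but leaves the eigenvectors (hence the dominant right/left eigenvectors $\mathbf{x},\mathbf{z}$) unchanged; this is what lets the eigenvector sign conditions pass verbatim from $M$ to $M+dI$.

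For $(i)\Rightarrow(ii)$, I would invoke the stated fact that every $O\in\mathcal{O}$ satisfies the strong complex PF property. Thus if $(M+dI)$ and $(M+dI)^{H}$ lie in $\mathcal{O}$, the matrix $M+dI$ has a dominant eigenvalue that is positive, simple, and strictly larger in modulus than all others, with a real-positive dominant right eigenvector (and, via the adjoint, a real-positive dominant left eigenvector). The plan here is to normalize so that the dominant eigenvalue is scaled out and then expand $(M+dI)^{k}$ in its spectral/Jordan decomposition: the rank-one term $\mathbf{x}\mathbf{z}^{H}$ built from the dominant eigenpair dominates as $k\to\infty$, and since $\Re(\mathbf{x}\mathbf{z}^{H})>0$ entrywise (this is where the $\mathcal{O}$ sign conditions do the real work), the real part of the power is eventually entrywise positive. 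This yields real eventual positivity for all $k\geq k_{0}$.

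For $(ii)\Rightarrow(iii)$, I would connect eventual positivity of powers to eventual positivity of the matrix exponential. Writing $e^{Mt}=e^{-dt}e^{(M+dI)t}$, the scalar factor $e^{-dt}>0$ is irrelevant to the sign of the real part, so it suffices to show $\Re\!\left(e^{(M+dI)t}\right)>0$ for large $t$. The natural route is the limit/series relation $e^{(M+dI)t}=\lim_{k\to\infty}\bigl(I+\tfrac{(M+dI)t}{k}\bigr)^{k}$ together with the fact that eventual positivity of powers of a matrix with the strong PF property transfers to its exponential; concretely, the exponential again has the spectral expansion dominated by $e^{\lambda_{1}t}\,\Re(\mathbf{x}\mathbf{z}^{H})$, whose real part is eventually positive by the same dominant-term argument. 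This is exactly the content of Definition~\ref{def:real EEP}, giving rEEP of $M$.

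Finally, for $(iii)\Rightarrow(i)$, I would argue contrapositively through the dominant spectral structure: rEEP of $M$ forces $\Re(e^{Mt})>0$ for large $t$, and examining the leading term of the exponential's spectral expansion recovers that the dominant eigenvalue of $M+dI$ must be positive and simple with a real-positive dominant eigenvector, i.e.\ the strong complex PF property, which under the standing eigenvector hypotheses is equivalent to $(M+dI),(M+dI)^{H}\in\mathcal{O}$. I expect the main obstacle to be the $(ii)\Leftrightarrow(iii)$ passage in the \emph{complex} setting: over the reals, eventual positivity of powers and of the exponential are tied together cleanly through Perron--Frobenius, but here one must control the \emph{real part} of a complex exponential, so the delicate point is verifying that the dominant rank-one term $\Re(\mathbf{x}\mathbf{z}^{H})$ is genuinely entrywise positive and that the subdominant terms, which may oscillate through the imaginary exponentials $e^{i\,\Im(\lambda_{j})t}$, are uniformly dominated. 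The sign constraints defining $\mathcal{O}$ are exactly what guarantee the first; a gap $\lambda_{1}>|\lambda_{j}|$ is what guarantees the second.
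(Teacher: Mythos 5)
A preliminary remark: the paper itself never proves Theorem~\ref{stronglemma} --- it is explicitly imported ``without proof'' from the prior work \cite{saxena2024realeventualexponentialpositivity} --- so your proposal can only be judged against the standard argument this result rests on (the complex analogue of the Noutsos--Tsatsomeros equivalence between eventual positivity of powers, eventual positivity of the exponential, and the strong PF property of the shifted matrix). Your architecture is the right one: the implication cycle, the shift $M+dI$, the dominant rank-one term $\mathbf{x}\mathbf{z}^{H}$ in both the powers and the exponential, and the linearity of $\Re$ that lets sign questions pass through the expansions. However, the step you flag as ``where the $\mathcal{O}$ sign conditions do the real work'' is precisely where your argument fails. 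The conditions $\Re(\mathbf{x})\geq|\Im(\mathbf{x})|$ and $\Re(\mathbf{z})\geq|\Im(\mathbf{z})|$, even combined with $\Re(\mathbf{x})>0$, $\Re(\mathbf{z})>0$ from the strong complex PF property, give only $\Re(x_i\bar z_j)=\Re(x_i)\Re(z_j)+\Im(x_i)\Im(z_j)\geq 0$, not $>0$: take $x_i=1+i$ and $z_j=1-i$, so that $x_i\bar z_j=(1+i)(1+i)=2i$ has zero real part while all hypotheses hold. At such an entry the sign of $\Re\bigl((M+dI)^k\bigr)_{ij}$ is governed by the subdominant, oscillating eigenvalues, so eventual positivity does not follow from your dominant-term argument as written (this imprecision is partly inherited from the non-strict inequalities in the paper's Eqn.~\eqref{eq:Q}, but your proof is where it bites). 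Relatedly, the spectral projector onto the dominant eigenspace is $\mathbf{x}\mathbf{z}^{H}/(\mathbf{z}^{H}\mathbf{x})$, and you never account for the normalizing scalar $\mathbf{z}^{H}\mathbf{x}$, which under these hypotheses need not be real and positive; a complex phase there would rotate the leading term and destroy positivity of its real part.

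The second genuine gap is that $(iii)\Rightarrow(i)$, the hard direction, is circular as sketched. You ``examine the leading term of the exponential's spectral expansion'' to recover a positive, simple, strictly modulus-dominant eigenvalue of $M+dI$ with well-signed eigenvectors --- but the existence of a single, real, strictly rightmost eigenvalue of $M$ is exactly what must be extracted from the hypothesis $\Re(e^{Mt})>0$ for $t\geq t_0$; you cannot start from an expansion that presupposes it. In the real-valued theory this step is done by applying classical Perron--Frobenius to the positive matrix $e^{At}$ for a fixed large $t$; here positivity of only the real part of $e^{Mt}$ provides no nonnegative matrix to which Perron--Frobenius applies, so one needs the complex PF machinery of \cite{varga2012} (or a separate argument excluding multiple eigenvalues of maximal real part, e.g.\ conjugate pairs whose oscillation would violate positivity on an unbounded set of times). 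Finally, $(i)$ asserts existence of a suitable $d\geq 0$, so you must actually produce it: once $\lambda_1$ is known to be real and strictly rightmost, one needs $d$ large enough that $\lambda_1+d>|\lambda_j+d|$ for all $j\geq 2$ (which holds since $(\lambda_1+d)^2-|\lambda_j+d|^2\to\infty$ as $d\to\infty$); this small but necessary computation is absent from your sketch.
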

\begin{lemma}
\label{Thm:L undirected}
Consider an undirected unsigned graph such that $L \in \mathbb{C}^{n\times n}$. Then, the following statements are equivalent:
\begin{enumerate}[label=(\roman*)]
\item $-L$ is rEEP,
\item $-L$ has `0' as a simple eigenvalue.
\end{enumerate}
\end{lemma}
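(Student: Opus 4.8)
The plan is to treat $-L$ as the matrix $M$ in Theorem~\ref{stronglemma} and let that equivalence do the heavy lifting, after verifying its hypotheses for the Laplacian of an undirected unsigned graph. First I would record the two structural facts I need. Because $D_{out}=\operatorname{diag}(A\mathbb{1}_n)$, we have $L\mathbb{1}_n = D_{out}\mathbb{1}_n - A\mathbb{1}_n = \mathbf{0}$, so $0$ is always an eigenvalue of $-L$ with right eigenvector $\mathbb{1}_n$; since the graph is undirected, $L$ is symmetric and $\mathbb{1}_n$ is also a left eigenvector for $0$. Moreover, for an unsigned undirected graph the eigenvalue of $-L$ with largest real part is $0$, so $0$ is the dominant eigenvalue and $\mathbb{1}_n$ is the associated dominant right and left eigenvector. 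Crucially, $\mathbb{1}_n$ satisfies the cone condition defining $\mathcal{O}$ in Eqn.~\eqref{eq:Q}, namely $\Re(\mathbb{1}_n)=\mathbb{1}_n \geq \mathbf{0} = |\Im(\mathbb{1}_n)|$, so the eigenvector hypotheses of Theorem~\ref{stronglemma} are met with $\mathbf{x}=\mathbf{z}=\mathbb{1}_n$.

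Next I would introduce the shift $M:=-L+dI$ for a real $d$ chosen large enough that $\Re(M)\geq \mathbf{0}$ entrywise: the off-diagonal entries of $M$ are the weights $a_{ij}$ with $\Re(a_{ij})\geq 0$, while the diagonal entries $d-(D_{out})_{ii}$ have non-negative real part once $d\geq \max_i \Re\big((D_{out})_{ii}\big)$. Thus $M$ is real non-negative, its dominant eigenvalue is $d>0$, and $0$ is a simple eigenvalue of $-L$ if and only if $d$ is a simple dominant eigenvalue of $M$. Since $\Re(e^{-Lt})=e^{-dt}\Re(e^{Mt})$ and $e^{-dt}>0$, the rEEP of $-L$ is governed entirely by the eventual positivity of $\Re(e^{Mt})$, which is exactly the object Theorem~\ref{stronglemma} links to membership of $M$ and $M^H$ in $\mathcal{O}$.

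With this setup both implications collapse onto Theorem~\ref{stronglemma}. For (ii)$\Rightarrow$(i), assume $0$ is simple; then $d$ is the simple dominant eigenvalue of the real non-negative matrix $M$, and together with the cone condition on $\mathbb{1}_n$ this places $M$ and $M^H$ in $\mathcal{O}$ and certifies the strong complex PF property (Definition~\ref{def:strong PF}). Statement (i)$\Rightarrow$(iii) of Theorem~\ref{stronglemma} then gives that $-L$ is rEEP. For (i)$\Rightarrow$(ii) I would run the chain in reverse: if $-L$ is rEEP, statement (iii)$\Rightarrow$(i) of Theorem~\ref{stronglemma} forces $M \in \mathcal{O}$, so $M$ obeys the strong complex PF property and its dominant eigenvalue $d$ is simple, i.e.\ $0$ is a simple eigenvalue of $-L$. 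As an independent sanity check for this direction, note that if $0$ were not simple the unsigned undirected graph would be disconnected, making $L$ and hence $e^{-Lt}$ block diagonal after a common permutation; the cross-component entries of $e^{-Lt}$ then vanish for every $t$, so $\Re(e^{-Lt})$ can never be entrywise positive and $-L$ is not rEEP.

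The step I expect to be the main obstacle is justifying the complex Perron--Frobenius input in the (ii)$\Rightarrow$(i) direction: I must confirm that simplicity of $0$, for a merely real non-negative (not classically non-negative) matrix $M$ with complex entries, genuinely yields the strict spectral gap $d>|\lambda_j(M)|$ and the sign-definite dominant eigenvector required by $\mathcal{O}$ and Definition~\ref{def:strong PF}. Equivalently, I need the fact that for complex-valued unsigned undirected Laplacians ``$0$ simple'' still coincides with connectivity and with irreducibility of $M$, since the classical real Perron--Frobenius theorem does not apply verbatim here. Verifying that the all-ones eigenvector satisfies the cone condition is routine, so the whole argument hinges on feeding these spectral facts cleanly into Theorem~\ref{stronglemma}.
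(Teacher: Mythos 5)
Your overall template --- shift to $B=dI-L$, check the cone condition for $\mathbb{1}_n$, and let Theorem~\ref{stronglemma} do the work --- is the same one the paper uses for its analogous signed result (Theorem~\ref{Thm:signed}); note that the paper never proves Lemma~\ref{Thm:L undirected} itself, importing it without proof from \cite{saxena2024realeventualexponentialpositivity}. However, your proof has a genuine gap, and it sits exactly at the step you flagged as the ``main obstacle'': the claim that entrywise real non-negativity of $M=dI-L$, plus simplicity of the eigenvalue $d$, yields the strict spectral gap $d>|\lambda_j(M)|$ demanded by Definition~\ref{def:strong PF}. No complex Perron--Frobenius fact available in the paper supplies this. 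Indeed, the eigenvalues of $M$ are $d-\lambda_j(L)$, and $|d-\lambda_j|\leq d$ is equivalent to $|\lambda_j|^2\leq 2d\,\Re(\lambda_j)$; hence any \emph{nonzero} eigenvalue of $L$ lying on the imaginary axis violates the gap for \emph{every} choice of $d$, no matter how large.

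Such eigenvalues genuinely occur under the stated hypotheses, so the gap is not a technicality one can talk around. Take two nodes joined by the unsigned weight $a_{12}=a_{21}=i$, so that
\begin{equation*}
L=\begin{bmatrix} i & -i\\ -i & i\end{bmatrix},\qquad \operatorname{spec}(L)=\{0,\,2i\}.
\end{equation*}
The graph is connected and $0$ is a simple eigenvalue, yet
\begin{equation*}
\Re\bigl(e^{-Lt}\bigr)=\tfrac{1}{2}\begin{bmatrix} 1+\cos 2t & 1-\cos 2t\\ 1-\cos 2t & 1+\cos 2t\end{bmatrix}
\end{equation*}
has a zero entry at every $t=k\pi/2$, so $-L$ is not rEEP; correspondingly, $\rho(dI-L)=|d-2i|=\sqrt{d^2+4}>d$, the dominant eigenvector of $dI-L$ is $(1,-1)^{\top}$ (which fails the cone condition $\Re(\mathbf{x})\geq|\Im(\mathbf{x})|$ in either sign), and Theorem~\ref{stronglemma} is simply inapplicable. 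What any correct proof of (ii)$\Rightarrow$(i) must establish --- presumably from hypotheses present in \cite{saxena2024realeventualexponentialpositivity} but not restated in the lemma, e.g.\ connectivity of the graph of real parts $\Re(a_{ij})$ --- is that every nonzero eigenvalue of $L$ has strictly positive real part; your proposal neither proves this nor can it, since it fails in the example above. Separately, your closing ``sanity check'' (if $0$ is not simple then the graph is disconnected) also needs an argument for complex-valued $L$: complex symmetric matrices are not normal, so algebraic and geometric multiplicities may differ, and the classical ``multiplicity of $0$ equals number of components'' fact does not transfer verbatim.
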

%\rajmargin{First sentences of Lemma 1 and 2 do not make any sense to me!}
\begin{lemma}\label{Thm:L digraph}
Consider an unsigned digraph such that $L\in\mathbb{C}^{n \times n}$ corresponds to a weight-balanced and strongly connected graph. Then, the following statements are equivalent:
\begin{enumerate}[label=(\roman*)]
    \item $-L$ is rEEP,
    \item $-L$ has `0' as a simple eigenvalue.
\end{enumerate}
\end{lemma}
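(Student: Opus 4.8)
The plan is to deduce the equivalence from Theorem~\ref{stronglemma} applied to $M=-L$, reducing the rEEP of $-L$ to a statement about a single shifted matrix $dI-L$ and its adjoint lying in the cone $\mathcal{O}$. The first step exploits the graph structure of $L=D_{out}-A$. Since $D_{out}=\mathrm{diag}(A\mathbb{1}_n)$ one always has $L\mathbb{1}_n=\mathbf{0}$, and weight-balance (equality of in- and out-degree at each node) gives in addition $\mathbb{1}_n^{H}L=\mathbf{0}$. Thus $\mathbb{1}_n$ is simultaneously a right and a left eigenvector of $L$ for the eigenvalue $0$; this is exactly where weight-balance plays the role that self-adjointness plays in Lemma~\ref{Thm:L undirected}. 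Provided $0$ is the dominant eigenvalue of $-L$, its dominant right and left eigenvectors are both $\mathbb{1}_n$, which is real and nonnegative, so $\Re(\mathbb{1}_n)=\mathbb{1}_n\ge\mathbf{0}=|\Im(\mathbb{1}_n)|$ and the eigenvector hypotheses of Theorem~\ref{stronglemma} hold automatically. Consequently ``$-L$ is rEEP'' is equivalent to ``$(dI-L)$ and $(dI-L)^{H}$ belong to $\mathcal{O}$'' for some $d\ge 0$.

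Next I would identify membership in $\mathcal{O}$ with the strong complex PF property of $dI-L$ and read off both implications from the multiplicity of the zero eigenvalue. The spectrum of $dI-L$ is $\{\,d-\lambda_j : \lambda_j\in\operatorname{spec}(L)\,\}$, and the value $d$ (coming from $\lambda=0$) is attached to the real positive eigenvector $\mathbb{1}_n$. For (ii)$\Rightarrow$(i): if $0$ is simple for $-L$ then $d$ is a simple eigenvalue of $dI-L$, and taking $d$ large enough yields $d>|d-\lambda_j|$ for every $\lambda_j\neq 0$ (equivalently $2d\,\Re(\lambda_j)>|\lambda_j|^{2}$); hence $d$ is the simple, strictly dominant eigenvalue with a real positive eigenvector, so $dI-L$ follows the strong complex PF property, lies in $\mathcal{O}$, and $-L$ is rEEP. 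For (i)$\Rightarrow$(ii) I would argue by contraposition: if $0$ is not simple then $d$ fails to be a simple dominant eigenvalue, the strong complex PF property breaks, $dI-L\notin\mathcal{O}$, and Theorem~\ref{stronglemma} forbids $-L$ from being rEEP.

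The main obstacle is the spectral localization that both directions implicitly use: that $0$ is genuinely the dominant eigenvalue of $-L$ with all-ones dominant eigenvector, and that under strong connectivity every nonzero $\lambda_j$ lies strictly in the open right half-plane (so that a \emph{finite} shift $d$ realizes the gap $2d\,\Re(\lambda_j)>|\lambda_j|^{2}$ and so that the cone condition is attained at $\mathbb{1}_n$). Unlike the self-adjoint undirected setting of Lemma~\ref{Thm:L undirected}, this is not automatic for complex weights: the shifted matrix $dI-L=(dI-D_{out})+A$ is not entrywise nonnegative once $\Im(a_{ij})>0$, so classical Perron--Frobenius arguments do not apply and one must route the localization through the complex Perron--Frobenius theory of Definition~\ref{def:strong PF} together with the weight-balance and strong-connectivity structure. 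Pinning down that these two structural hypotheses force $\mathbb{1}_n$ to be the dominant PF eigenvector and push the remaining spectrum off the imaginary axis is the delicate part; once it is in hand, the remainder is bookkeeping through Theorem~\ref{stronglemma}.
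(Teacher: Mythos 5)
Your scaffolding is reasonable and, for what it is worth, mirrors the strategy this paper itself uses where it \emph{does} give proofs (e.g., the proof of Theorem~\ref{Thm:signed}): use weight-balance to get $\mathbb{1}_n^{T}L=\mathbb{0}_n^{T}$ alongside $L\mathbb{1}_n=\mathbb{0}_n$, so that $\mathbb{1}_n$ is simultaneously the right and left eigenvector attached to the zero eigenvalue, and then push everything through the shift $B=dI-L$ and Theorem~\ref{stronglemma}. (Note that the paper never proves Lemma~\ref{Thm:L digraph}; it is imported ``without proof'' from prior work, so the only in-paper comparison available is with the analogous argument for Theorem~\ref{Thm:signed}.) The problem is the step you flag as ``the delicate part'' and then defer: that strong connectivity and weight-balance force every nonzero $\lambda_j\in\operatorname{spec}(L)$ into the open RHP, so that a finite $d$ achieves $d>|d-\lambda_j|$. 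That step is not a technicality to be postponed --- it is the entire mathematical content of the implication (ii)$\Rightarrow$(i), and under the definitions stated in this paper it is actually \emph{false}. Consider the directed $3$-cycle $1\to 2\to 3\to 1$ with every edge weight equal to $i$: this graph is unsigned in the paper's sense ($\Re=0\geq 0$, $\Im=1\geq 0$), weight-balanced, and strongly connected. Here $L=i(I-P)$ with $P$ the cyclic permutation matrix, so $\operatorname{spec}(L)=\{0,\ \tfrac{\sqrt3}{2}+\tfrac{3}{2}i,\ -\tfrac{\sqrt3}{2}+\tfrac{3}{2}i\}$: statement (ii) holds (zero is simple), but $L$ has an eigenvalue in the open LHP. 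Writing $e^{-Lt}=\sum_k e^{-\mu_k t}v_kv_k^{H}/3$ with $v_k=(1,\omega^k,\omega^{2k})^{T}$, the $(1,1)$ entry equals $\tfrac13\bigl(1+e^{-(\sqrt3/2)t}\cos(\tfrac32 t)+e^{(\sqrt3/2)t}\cos(\tfrac32 t)\bigr)$, which is negative for arbitrarily large $t$; hence $-L$ is not rEEP and statement (i) fails. So no choice of $d$ can make $dI-L$ satisfy the strong complex PF property, and your proof cannot be completed from the stated hypotheses: one needs an additional spectral-localization assumption (nonzero spectrum of $L$ in the open RHP, or a restriction on the arguments of the complex weights), which cannot be obtained from Ger\v{s}gorin-type reasoning once the weights have nonzero imaginary parts.

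A secondary, more repairable weak point is your contrapositive for (i)$\Rightarrow$(ii). Theorem~\ref{stronglemma} asserts its equivalences only \emph{under} the standing hypothesis $\Re(\mathbf{x})\geq|\Im(\mathbf{x})|$, $\Re(\mathbf{z})\geq|\Im(\mathbf{z})|$ on the dominant eigenvectors; when the zero eigenvalue is not simple, the dominant eigenvectors are not even well defined and that hypothesis may be unverifiable, so the theorem cannot by itself ``forbid'' rEEP. A cleaner route is direct: if $\Re(e^{-Lt})>0$ for all large $t$, a Perron--Frobenius-type argument applied to these real positive matrices forces the dominant eigenvalue of $e^{-Lt}$ (namely $e^{0}=1$, once the growth issue above is excluded) to be simple, which yields simplicity of the zero eigenvalue of $L$. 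That fix is routine; the localization gap in the previous paragraph is the one that sinks the attempt as written.
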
 
\section{Psuedoinverse Laplacian flows}\label{sec: pseudo-motivation}
Consider the following continuous-time consensus dynamics for the $i$-th agent, for all $i=\{1,2,\ldots,N\}$: 
\begin{align}\label{eq: pseudo-motivation1}
\dot{x}_i(t)=-\frac{x_i(t)}{N_i}+\dot{x}_j(t), 
\end{align}
where $N_i\leq N$ is the number of neighbors of the $i$-th agent. 
%As a physical motivation, we could think of $x_i(t)$ as a position of the $i$-th vehicle. The dynamics in Eqn.\eqref{eq: pseudo-motivation1} then describe the relative velocity between vehicles $i$ and $j$. 
Aggregating the dynamics in Eqn.\eqref{eq: pseudo-motivation1} across the index $j$ and rearrangement leads to 
%\begin{align}
%N_i\dot{x}_i(t)=-x_i(t)+\sum_{j\sim i}\dot{x}_j(t),  
%\end{align}
%which upon 
\begin{align}
\sum_{j\sim i}(\dot{x}_i(t)-\dot{x}_i(t))=-x_i(t). 
\end{align}
Let $x(t)=[x_1(t),\ldots,x_N(t)]^\top$.
Then the consensus dynamics for all agents are governed by $L\dot{x}(t) = -x(t)$, which, through the Moore-Penrose pseudoinverse formalism introduced earlier, results in the pseudoinverse Laplacian flow system:
\begin{align}\label{eq: pseudo-motivation2}
\dot{x}(t)=-L^\dagger x(t). 
\end{align}
We discuss how pseudoinverse flow systems arise naturally in electrical networks. Consider the impedance network in Fig.\ref{fig:z n/w} with branch (or series) resistances and capacitance and shunt inductance ($\mathcal{L}_\text{ind}$). Employing \emph{Kirchhoff's circuit laws}, we have the balance equation as: \cite{bullo2018lectures}
\begin{align}\label{eq:ohms}
    I(t)=Lv(t),
\end{align}
where $I(t)$ and $v(t)$ are the instantaneous nodal currents and voltages vectors, and $L$ is the complex-valued Laplacian arising from the impedance elements. On the other hand, Eqn.\ref{eq:vol.L}
\begin{align}\label{eq:vol.L}
    v(t)=-\mathcal{L}_\text{ind} \frac{dI}{dt}
\end{align}
establishes the relationship between nodal voltages and voltage drops across the shunt inductances (see Fig.\ref{fig:z n/w}). Combining Eqns.\eqref{eq:ohms}-\eqref{eq:vol.L}, we get the pseudoinverse flow system as 
\begin{align}\label{eq:L pseudoinverse flow}
    \dot{I}(t)=\frac{1}{\mathcal{L}_{ind}}(-L^{\dagger})I(t).
\end{align}
%Ex.\ref{ex:RLexample} based on this concept is provided in the next section; see Fig.\ref{fig:z n/w}. 
 Laplacian pseudoinverses have notable applications in social and power networks \cite{cetinay2016topological,li2017consensus,fontan2023pseudoinverses,van2017pseudoinverse}. %Laplacian pseudoinverses relate phase angles of voltages to injected power within a network in load flow problems in \cite{cetinay2016topological}. Given the better accuracy of AC load flow compared to DC load flow, the results of this study may prove valuable.
With their potential in different areas, they present an intriguing avenue for research. Hence, we discuss some results on pseudoinverses in the following section. %\rajmargin{Rewrite! You have already motivated pseudoinverses in introduction; so just summarize in one sentence.--ok}

We commence with some general results that hold regardless of whether the graph is signed or unsigned. The only condition here is that the Laplacian matrix should have zero row and column sums along with a simple `$0$' eigenvalue. %\rajmargin{Please change the word "invariant" in the Lemma--ok done}
\begin{lemma}\label{lem:invariant null space}
    The set of eigenvectors corresponding to the zero eigenvalue of the Laplacian matrix remains same as of the Laplacian pseudoinverse matrix.
\end{lemma}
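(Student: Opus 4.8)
The plan is to restate the claim as the equality of null spaces $\ker(L)=\ker(L^{\dagger})$, since the eigenvectors associated with the zero eigenvalue of a matrix are exactly the nonzero vectors of its kernel. I would split the argument into two stages: a purely algebraic pseudoinverse fact valid for any matrix, followed by a graph-theoretic specialization that invokes the row- and column-sum hypotheses stated just before the lemma.

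First I would establish the algebraic identity $\ker(L^{\dagger})=\ker(L^{H})$. The product $LL^{\dagger}$ is self-adjoint by condition \ref{eq3:Linv} of Definition~\ref{def:pseudoinverse} and idempotent, since $(LL^{\dagger})(LL^{\dagger})=L(L^{\dagger}LL^{\dagger})=LL^{\dagger}$ by condition \ref{eq2:Linv}; hence it is an orthogonal projector. A short argument using $LL^{\dagger}L=L$ and $L^{\dagger}LL^{\dagger}=L^{\dagger}$ shows $\ker(LL^{\dagger})=\ker(L^{\dagger})$, while as an orthogonal projector onto $\mathrm{range}(L)$ its kernel is $\mathrm{range}(L)^{\perp}$. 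Combining these with the fundamental relation $\mathrm{range}(L)^{\perp}=\ker(L^{H})$ yields $\ker(L^{\dagger})=\ker(L^{H})$.

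It then remains to show $\ker(L)=\ker(L^{H})$, and this is where the hypotheses enter. The zero row-sum condition gives $L\mathbb{1}_n=\mathbf{0}$, so $\mathbb{1}_n\in\ker(L)$; the zero column-sum condition gives $\mathbb{1}_n^{\top}L=\mathbf{0}^{\top}$, and taking conjugate transposes (using that $\mathbb{1}_n$ is real) gives $L^{H}\mathbb{1}_n=\mathbf{0}$, so $\mathbb{1}_n\in\ker(L^{H})$. Because $0$ is a simple eigenvalue of $L$, we have $\mathrm{corank}(L)=1$, and since $\mathrm{corank}(L^{H})=n-\mathrm{rank}(L^{H})=n-\mathrm{rank}(L)=\mathrm{corank}(L)=1$, both kernels are one-dimensional. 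Each therefore equals $\mathrm{span}\{\mathbb{1}_n\}$, so $\ker(L)=\ker(L^{H})=\ker(L^{\dagger})$, which is the claim.

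The main obstacle I anticipate is precisely the middle equality $\ker(L)=\ker(L^{H})$: for a general complex or signed Laplacian the kernels of $L$ and $L^{H}$ need not coincide even though they always share the same dimension, so the column-sum balance condition is genuinely needed and cannot be dropped. Everything else is routine manipulation of the Moore-Penrose identities, so I would concentrate the care on verifying that the simple-eigenvalue hypothesis is used exactly where it forces both null spaces down to the single line $\mathrm{span}\{\mathbb{1}_n\}$.
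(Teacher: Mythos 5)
Your proof is correct, and it takes a genuinely different (and somewhat more complete) route than the paper's. The paper argues directly at the level of the vector $\mathbb{1}_n$: using $L^{\dagger}=L^{\dagger}LL^{\dagger}$ and the self-adjointness of $L^{\dagger}L$ and $LL^{\dagger}$ from Definition~\ref{def:pseudoinverse}, it computes $\mathbb{1}_n^{T}L^{\dagger}=\mathbb{1}_n^{T}L^{\dagger}LL^{\dagger}=\mathbb{1}_n^{T}(L^{\dagger}L)^{H}L^{\dagger}=(L\mathbb{1}_n)^{H}(L^{\dagger})^{H}L^{\dagger}=\mathbb{0}_n^{T}$, and symmetrically $L^{\dagger}\mathbb{1}_n=\mathbb{0}_n$; that is, it pushes the known null vector of $L$ through the Moore--Penrose identities to exhibit $\mathbb{1}_n$ as a left and a right null vector of $L^{\dagger}$, and the passage from ``$\mathbb{1}_n$ lies in both kernels'' to ``the zero-eigenspaces coincide'' is delegated to the standing simple-zero-eigenvalue assumption stated just before the lemma. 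You instead first prove the hypothesis-free identity $\ker(L^{\dagger})=\ker(L^{H})$ via the orthogonal-projector structure of $LL^{\dagger}$, and only then invoke the row- and column-sum conditions together with explicit corank counting to pin both kernels to $\operatorname{span}\{\mathbb{1}_n\}$. What your route buys: it isolates a reusable general fact, and it makes explicit exactly where the simplicity of the zero eigenvalue is needed --- a dimension-counting step the paper's proof leaves implicit, so your write-up is arguably the more rigorous of the two. What the paper's route buys: it is shorter, and it delivers the left null-vector relation $\mathbb{1}_n^{T}L^{\dagger}=\mathbb{0}_n^{T}$ in the same computation; this matters because Theorems~\ref{Thm:signed} and~\ref{Thm3:unsigned digraph} quote the lemma for both right and left eigenvectors ($x$ and $y^{H}$), whereas your argument as written only covers $\ker(L^{\dagger})$. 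The left-kernel statement follows by applying your projector identity to $L^{H}$ (using $(L^{H})^{\dagger}=(L^{\dagger})^{H}$ and the projector $L^{\dagger}L$), which you should state explicitly so that the lemma covers the form in which it is used downstream.
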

\begin{proof}
    For undirected graphs and weight-balanced digraphs, we have $L \mathbb{1}_{n}=\mathbb{0}_{n} \text{ and } \mathbb{1}^T_{n}L=\mathbb{0}^T_{n}.$ Further, from Def.\ref{def:pseudoinverse},
    \begin{equation}\label{eq:selfadjoint1}
    (L^{\dagger}L)^H=L^{\dagger}L,  
    \end{equation}
    \begin{equation}\label{eq:selfadjoint2}
    (L L^{\dagger})^H=L L^{\dagger},  
    \end{equation}
 Using Eqn.\eqref{eq:selfadjoint1},
$ \mathbb{1}^T_{n}L^{\dagger}= \mathbb{1}^T_{n}L^{\dagger}L L^{\dagger}= \mathbb{1}^T_{n}(L^{\dagger}L)^H L^{\dagger},$ %$=\mathbb{1}^T_{n}L^H(L^{\dagger})^H L^{\dagger}$
        \begin{align*}
        \mathbb{0}_n^T (L^{\dagger})^H L^{\dagger}=\mathbb{0}^T_n,
    \end{align*}
    Similarly using Eqn.\eqref{eq:selfadjoint2}, $L^{\dagger}\mathbb{1}_{n}= \mathbb{0}_n.$
    \end{proof}
Thus, if the eigenvectors of $L$ corresponding to the zero eigenvalue lie in $\operatorname{span}\left \{  \mathbb{1}_{n}\right \}$, then the same holds true for the eigenvectors of $L^\dagger$. Next, we derive the following result to state the relation between the Laplacian and pseudoinverse Laplacian flows in reaching consensus. 
%Agreement or consensus protocols usually find applications in multi-agent systems, wireless communication graphs and synchronization of linear systems\cite{dorfler2012synchronization,olfati2007consensus}. 
Further, we show in Ex.\ref{ex:RLexample}, that the current flowing in an impedance circuit stabilizes to a steady-state value following a transient phase as an application of Laplacian pseudoinverse flows.%Instead of the Laplacian matrix, we use the pseudoinverse Laplacian matrix to study the stability of system. 
\begin{theorem}\label{Thm:consensus}
In undirected graphs and weight-balanced digraphs, the Laplacian flows achieve consensus if and only if the pseudoinverse Laplacian flows achieve consensus. %\rajmargin{We have decided to use the statement We assume that -L is rEEP in Thm 2.-- If we assume that -L is rEEP in Thm 2., then iff can't be written anymore.}
 \end{theorem}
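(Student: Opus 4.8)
The plan is to reduce the statement to a spectral comparison between $L$ and $L^{\dagger}$, reading ``achieving consensus'' as semiconvergence of the associated flow. For $\dot{x}(t)=-Mx(t)$ with $M\in\{L,L^{\dagger}\}$, every trajectory converges to a point of $\operatorname{span}\{\mathbb{1}_{n}\}$ exactly when $0$ is a simple eigenvalue of $M$ with eigenvector $\mathbb{1}_{n}$ and every remaining eigenvalue of $M$ has strictly positive real part (so that $e^{-Mt}$ tends to the projection onto the consensus line). For the Laplacian itself, Lemmas~\ref{Thm:L undirected} and \ref{Thm:L digraph} let me replace this condition by the single requirement that $-L$ be rEEP, since in the undirected unsigned and weight-balanced strongly connected classes ``$0$ simple'' already forces the rest of the spectrum into the open right half-plane. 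Hence it suffices to show that $-L^{\dagger}$ inherits the same semiconvergence behaviour.

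First I would handle the nullity and eigenvector bookkeeping. Because $L^{\dagger}$ has the same rank as $L$, the two matrices have equal corank, so $0$ is a simple eigenvalue of $L$ if and only if it is a simple eigenvalue of $L^{\dagger}$. Lemma~\ref{lem:invariant null space} already establishes $L^{\dagger}\mathbb{1}_{n}=\mathbb{0}_{n}$ and $\mathbb{1}_{n}^{T}L^{\dagger}=\mathbb{0}_{n}^{T}$, so the zero eigenvector (and left zero eigenvector) of $L^{\dagger}$ is again $\mathbb{1}_{n}$, matching that of $L$. This settles the ``simple zero at $\mathbb{1}_{n}$'' half of the semiconvergence condition for both flows at once.

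Next I would compare the nonzero spectra, and here the key point is that $L$ is range-Hermitian (EP) in both classes: for undirected graphs $L=L^{H}$ directly, while for weight-balanced strongly connected digraphs $\mathbb{1}_{n}$ spans both $\ker L$ and $\ker L^{H}$, so $\ker L=\ker L^{H}$. In either case $L$ and $L^{\dagger}$ share an eigenbasis and the nonzero eigenvalues of $L^{\dagger}$ are precisely $\{1/\lambda:\lambda\in\operatorname{spec}(L),\ \lambda\neq 0\}$. The elementary fact I would then invoke is that inversion preserves the open right half-plane: writing $\lambda=a+b\,\mathrm{i}$ with $a>0$ gives $\Re(1/\lambda)=a/(a^{2}+b^{2})>0$. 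Consequently every nonzero eigenvalue of $L$ has positive real part if and only if every nonzero eigenvalue of $L^{\dagger}$ does, and combining this with the nullity argument shows that the semiconvergence condition holds for $-L$ if and only if it holds for $-L^{\dagger}$, which is the claimed equivalence.

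The main obstacle I anticipate is the nonzero-spectrum step in the weight-balanced digraph case, where $L$ need not be normal: I must argue carefully that $L$ is range-Hermitian so that $LL^{\dagger}=L^{\dagger}L$ and the reciprocal-spectrum relation is legitimate, rather than assuming a clean eigendecomposition. I would also keep in mind the caveat flagged in the abstract that for digraphs the rEEP/eigenvalue conditions are only \emph{sufficient} for consensus; the equivalence between the two flows should therefore be phrased through the shared spectral condition they both satisfy, not as a full characterisation of consensus in the directed setting.
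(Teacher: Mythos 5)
Your overall strategy is sound and, in substance, it is a more self-contained version of what the paper does. The paper's own proof is terse: it takes consensus of the Laplacian flow, invokes the limit $x(t)\to\mathbb{1}_n z^H x_0$, and then appeals to Corollary~\ref{Thm:undirected n/w} (and, implicitly, Theorem~\ref{Thm3:unsigned digraph} for digraphs), i.e.\ it routes everything through the rEEP equivalences, which in turn rest on the reciprocal-spectrum relation \eqref{eq:spec(Linv)} (cited from the literature) and Lemma~\ref{lem:invariant null space}. You bypass the rEEP intermediary entirely and argue directly with the semiconvergence characterisation of consensus, justifying the reciprocal-spectrum relation yourself via the EP (range-Hermitian) block decomposition $L=U\operatorname{diag}(K,0)U^H$. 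That is a legitimate and arguably cleaner route: it makes explicit the fact (used but not proved in the paper) that for EP matrices the nonzero spectrum of $L^\dagger$ consists exactly of the reciprocals $1/\lambda$, that $0$ is semisimple, and that inversion preserves the open right half-plane. Your concluding caveat about the directed case being only a sufficiency statement also matches the paper's own qualification.

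There is, however, one genuine error you should fix: the claim that ``for undirected graphs $L=L^H$ directly.'' In this paper the undirected Laplacians are \emph{complex symmetric} ($L=L^\top$), not Hermitian --- e.g.\ admittance matrices of power networks --- and their eigenvalues are genuinely complex, so $L=L^H$ fails and you cannot get the EP property for free in the undirected case. The repair is exactly the kernel argument you already use for digraphs: since $L=L^\top$, one has $L^H=\bar{L}$, so $\ker L^H=\overline{\ker L}$; when consensus holds, $\ker L=\operatorname{span}\{\mathbb{1}_n\}$ is spanned by a real vector and is therefore conjugation-invariant, giving $\ker L=\ker L^H$ and hence EP. A second, smaller point: equality of rank gives equality of \emph{geometric} multiplicity of the zero eigenvalue, not simplicity (algebraic multiplicity one); as stated, that step conflates the two. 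It is rescued only after EP is established, since the decomposition $L=U\operatorname{diag}(K,0)U^H$ shows $0$ is semisimple with algebraic multiplicity equal to the corank --- so you should order the argument so that EP comes first and the multiplicity claim is drawn from it, rather than from the rank comparison alone.
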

\begin{proof}
The Laplacian flows are given as $\dot{x}=-Lx$ for all $L \in  \mathbb{C}^{n \times n}$\cite{saxena2024realeventualexponentialpositivity}. Thus, the solution for Laplacian flows is given by $x(t)=e^{-Lt}x_{0}$, where $x_{0}$ is the initial condition.
If the Laplacian flow achieves consensus, then at $t \to \infty$, the following holds from \cite{saxena2024realeventualexponentialpositivity},
\begin{equation} \label{eq:sol.x}
x(t)= \mathbb{1}_{n} z^{H}x_{0}, 
\end{equation}
where $z^{H}x_{0}$ is constant, and hence $x(t)$ is constant. Now, let us consider the pseudoinverse Laplacian flows in 
Eqn.\eqref{eq: pseudo-motivation2}: $\dot{x}=-L^\dagger x$.
From Corollary \ref{Thm:undirected n/w} and Eqn.\eqref{eq:sol.x}, we conclude that as $t \to \infty$, the state $ x(t)=e^{-L^{\dagger}t}x_{0}$ achieves consensus. 
The reverse direction can also be proved on the same lines.\end{proof} 
Thm.\ref{Thm:consensus} shows that, in a certain class of networks, it is enough to check for consensus in either one of the flows: the Laplacian flow or its pseudoinverse; the other one follows from it. But, does the Laplacian (or the pseudoinverse Laplacian) flow achieve consensus for every network in this class? The answer to this question lies in the rEEP property as discussed next.
%Next, we provide a detailed analysis of these graphs, exploring the rEEP property. 
\section{Pseudoinverses of real eventually exponentially positive Laplacians}\label{sec:PSEUDOINVERSE}
%We use this result to claim the property of rEEP in the Laplacian pseudoinverses as discussed next. 
\subsection{Undirected networks}
In social networks, signed networks are more realistic than unsigned networks. 
We consider a set of signed graphs where the real part is non-negative and the imaginary part can be negative or non-negative. In this case, we evaluate the Laplacian matrix in an analogous way to unsigned networks (given in Sec.\ref{sec:preliminaries}). The motivation for leveraging this set of signed graphs stems from power system networks where the Laplacian matrix corresponds to the network’s admittance matrix. The corresponding signed Laplacian retains the fundamental properties of the former. The eigenvalues of both Laplacian and signed Laplacian matrices lie in the closed RHP. This can be proved using the $Ger\check{s}gorin ~Disk~ Theorem$\cite{bullo2018lectures}. Additionally, the signed Laplacian exhibits zero row and column sums as it is a complex symmetric matrix. 

We explore the rEEP property of Laplacian pseudoinverse matrices for the above-mentioned set of signed graphs. 
\begin{theorem}\label{Thm:signed}
Consider an undirected graph $\mathcal{G}(A)$ which is signed and connected such that $L \text{ and }L^{\dagger} \in \mathbb{C}^{n \times n}$ is complex symmetric, then the following statements are equivalent:
\begin{enumerate}[label=(\roman*)]
\item $L$ is $psd$ of corank 1,
\item -$L$ is rEEP,
\item $L^{\dagger}$ is $psd$ of corank 1,
\item -$L^{\dagger}$ is rEEP. 
\end{enumerate} \end{theorem}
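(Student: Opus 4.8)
The plan is to prove the four-way equivalence by establishing three links, namely (i)$\Leftrightarrow$(ii), the bridge (i)$\Leftrightarrow$(iii), and (iii)$\Leftrightarrow$(iv); transitivity then closes the cycle. The two ``$L$-only'' and ``$L^{\dagger}$-only'' equivalences, (i)$\Leftrightarrow$(ii) and (iii)$\Leftrightarrow$(iv), will reuse the mechanism behind Lemma \ref{Thm:L undirected}, while the bridge (i)$\Leftrightarrow$(iii) is the genuinely new ingredient and, I expect, the main obstacle.

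First I would handle (i)$\Leftrightarrow$(ii). For this class of signed graphs the Ger\v{s}gorin argument already places $\operatorname{spec}(L)$ in the closed RHP, so being $psd$ of corank $1$ reduces to ``$0$ is a simple eigenvalue of $L$.'' Since $L$ has zero row and column sums, $\mathbb{1}_n$ is both a right and a left eigenvector for the eigenvalue $0$; after the shift $M=-L+dI$ of Theorem \ref{stronglemma}, $\mathbb{1}_n$ becomes the dominant right/left eigenvector and trivially satisfies the cone condition $\Re(\mathbb{1}_n)=\mathbb{1}_n\geq\mathbb{0}=|\Im(\mathbb{1}_n)|$. Choosing $d$ large enough that $d>|d-\lambda|$ for every nonzero $\lambda\in\operatorname{spec}(L)$ makes $d$ the strictly dominant eigenvalue, so Theorem \ref{stronglemma} equates rEEP of $-L$ with the strong complex PF structure, i.e. with simplicity of the $0$ eigenvalue. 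This is exactly the signed analogue of Lemma \ref{Thm:L undirected}.

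Then comes the bridge (i)$\Leftrightarrow$(iii), which I expect to be the crux because the Moore--Penrose inverse is defined through the adjoint $L^{H}=\bar{L}$ whereas $L$ is only symmetric, so one cannot simply invert eigenvalues in a fixed eigenbasis. The key observation is that $\ker L=\ker L^{H}=\operatorname{span}\{\mathbb{1}_n\}$ (using $L^{H}\mathbb{1}_n=\overline{L\mathbb{1}_n}=\mathbb{0}$ together with $\operatorname{rank}L=\operatorname{rank}L^{H}$), so both range spaces equal $W:=\mathbb{1}_n^{\perp}$. Hence $W$ is $L$-invariant, $L|_W$ is a bijection of $W$, and the defining identities of Def.\ref{def:pseudoinverse} give $L^{\dagger}|_W=(L|_W)^{-1}$ with $L^{\dagger}\mathbb{1}_n=\mathbb{0}$. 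Consequently the nonzero eigenvalues of $L^{\dagger}$ are exactly the reciprocals of the nonzero eigenvalues of $L$. Since $\Re(1/\lambda)=\Re(\lambda)/|\lambda|^{2}$ has the same sign as $\Re(\lambda)$, reciprocation preserves membership in the closed RHP; combined with $\operatorname{corank}L^{\dagger}=\operatorname{corank}L$ and $L^{\dagger}\mathbb{1}_n=\mathbb{0}$ from Lemma \ref{lem:invariant null space}, this yields ``$L$ is $psd$ of corank $1$'' $\Leftrightarrow$ ``$L^{\dagger}$ is $psd$ of corank $1$.''

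Finally, (iii)$\Leftrightarrow$(iv) is the verbatim repetition of the first step with $L^{\dagger}$ in place of $L$: by hypothesis $L^{\dagger}$ is complex symmetric, by Lemma \ref{lem:invariant null space} it has zero row and column sums with $\mathbb{1}_n$ spanning its kernel, and by the bridge its spectrum also lies in the closed RHP, so the same Theorem \ref{stronglemma} argument equates ``$L^{\dagger}$ is $psd$ of corank $1$'' with ``$-L^{\dagger}$ is rEEP.'' The one point needing care throughout is that the strong-PF step demands the nonzero eigenvalues to have strictly positive real part, since a purely imaginary $\lambda$ would violate $d>|d-\lambda|$; I would note that the admittance-type structure of these signed Laplacians keeps the nonzero spectrum off the imaginary axis, a property that is moreover preserved under the reciprocation used in the bridge.
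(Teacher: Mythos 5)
Your high-level route coincides with the paper's: the rEEP statements (ii) and (iv) are converted into spectral statements via the shift $B=dI-L$ (resp.\ $dI-L^{\dagger}$) together with Theorem~\ref{stronglemma}, and the crossing from $L$ to $L^{\dagger}$ is by eigenvalue reciprocation. Where you genuinely differ is the bridge. The paper's proof of (ii)$\Rightarrow$(iii) simply cites the formula in Eqn.~\eqref{eq:spec(Linv)} (quoted from a reference) together with Lemma~\ref{lem:invariant null space}; you instead derive it: $\ker L=\ker L^{H}=\operatorname{span}\{\mathbb{1}_n\}$ because $L^{H}=\bar{L}$ and $\operatorname{rank}L=\operatorname{rank}L^{H}$, hence $\operatorname{range}L=(\ker L^{H})^{\perp}=\mathbb{1}_n^{\perp}=:W$ is $L$-invariant, $L|_W$ is invertible, and the Moore--Penrose identities force $L^{\dagger}|_{W}=(L|_W)^{-1}$ and $L^{\dagger}\mathbb{1}_n=\mathbb{0}_n$. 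This supplies exactly what the paper leaves unjustified: why inverting the nonzero spectrum is legitimate for a complex \emph{symmetric} (non-Hermitian, possibly non-diagonalizable) matrix whose pseudoinverse is defined through the adjoint. Your symmetric treatment of (iii)$\Leftrightarrow$(iv) --- rerunning the shift argument on $L^{\dagger}$, whose zero row and column sums come from Lemma~\ref{lem:invariant null space} --- is also cleaner than the paper's, which proves (iii)$\Rightarrow$(iv) by a separate Jordan-form computation of $\lim_{t\to\infty}e^{-L^{\dagger}t}$.

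The caveat in your final paragraph, however, is a genuine gap, and your proposed patch is false as stated. With psd defined as ``all eigenvalues in the closed RHP,'' statement (i) tolerates nonzero purely imaginary eigenvalues, and then (ii) really does fail: take two nodes joined by the single edge $a_{12}=i$ (admissible, since its real part $0$ is non-negative, and the graph is connected). Then $\operatorname{spec}(L)=\{0,2i\}$, so $L$ is psd of corank $1$, yet every entry of $\Re(e^{-Lt})$ equals $\tfrac12(1+\cos 2t)$ or $\tfrac12(1-\cos 2t)$, so some entry vanishes at arbitrarily large times and $-L$ is not rEEP. Thus ``admittance-type structure'' alone does not keep the spectrum off the imaginary axis; what does is an explicit extra hypothesis, e.g.\ that $\Re(L)$ is itself the Laplacian of a \emph{connected} unsigned graph: then $Lx=i\beta x$ gives $x^{H}\Re(L)x=0$, hence $\Re(L)x=0$, hence $x\in\operatorname{span}\{\mathbb{1}_n\}$ and $\beta=0$. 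You would need to state this hypothesis and note it survives reciprocation (it does, since $\Re(1/\lambda)=\Re(\lambda)/|\lambda|^{2}$). To be fair, the paper's own proof has the identical hole --- it writes the eigenvalues as $0<\lambda_2\le\dots\le\lambda_n$, tacitly treating the spectrum of a complex symmetric matrix as real --- so your attempt, which at least localizes the difficulty to this exact point, is the more honest of the two.
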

\begin{proof}
$(i) \implies (ii)$: We have, $L$ is $psd$ of corank $1$ which implies $\lambda_1=0 < \lambda_2 \leq \lambda_3....\leq \lambda_n$ are the eigenvalues of $L$. We choose a value of $d \in \mathbb{R}_{>0}$ such that $\rho(B)=d$ and the corresponding right eigenvector lies in the $\operatorname{span}\left \{  \mathbb{1}_{n}\right \}$. Then, we consider $B=dI-L$ which implies $B\mathbb{1}_{n}=d\mathbb{1}_{n}-L\mathbb{1}_{n}$. Therefore, $B$ has a dominant eigenvalue $d$ with $x \in \operatorname{span}\left \{  \mathbb{1}_{n}\right \}$. Hence, $B$ follows strong complex PF property and is real eventually positive (rEP). Using Thm.\ref{stronglemma}, $\Re(B^k)>0$ implies that $-L$ is rEEP. 
    
$(ii) \implies (i)$: Conversely, $-L$ is rEEP. From the previous part, we consider $B=dI-L$. Further, using Thm.\ref{stronglemma}, we know that $B$ is rEP, which is equivalent to $B$ having a simple dominant eigenvalue, $\rho(B)=d \text{ and } x \in \operatorname{span}\left \{  \mathbb{1}_{n}\right \}$. Reversing the construction of $L$, it must have a simple `0' eigenvalue and the remaining eigenvalues lie in the open RHP. Hence, $L$ is $psd$ of corank $1$. 

$(ii) \implies (iii)$: We have already shown that if $-L$ is rEEP, it has a simple `0' eigenvalue equivalently. Then, we have the spectrum of $L^{\dagger}$ given in Eqn.\eqref{eq:spec(Linv)},
\begin{equation}\label{eq:spec(Linv)}
spec(-L^{\dagger})=\left \{ 0, -1/\lambda _{2},-1/\lambda _{3},...., -1/\lambda _{n}\right \},
\end{equation}
where $\lambda _{1},\lambda _{2},...,\lambda _{n}$ are the eigenvalues of $L$ \cite{fontan2023pseudoinverses}. Using Lemma~\ref{lem:invariant null space}, $x,y^H \in \operatorname{span}\left \{  \mathbb{1}_{n}\right \}$. By the inherent properties of $L$, we know that all the eigenvalues lie in the closed RHP. Let $\lambda_j$ be a non-zero eigenvalue which belongs to the $\operatorname{spec}(L)$. Then, all the non-zero eigenvalues of $L^{\dagger}$ have positive real parts.
Therefore, $L^\dagger$ has all the eigenvalues in the closed RHP. Hence, $L^{\dagger}$ is $psd$ of corank 1.
     
$(iii) \implies (iv)$: We have, $L^{\dagger}$ is $psd$ of corank 1. Using the Jordan canonical form of $L^{\dagger}$, 
\begin{align}\label{eq:Linv jordan}
-L^{\dagger}=\begin{bmatrix}
 x^{(1)}|& 
X_{n,n-1}\end{bmatrix} \begin{bmatrix} -\lambda_{1} & 0 \\ 0 & -D^{-1}_{n-1,n-1} \end{bmatrix} \begin{bmatrix}
y^{(1)^{H}}\\Y_{n-1,n} \end{bmatrix},
\end{align}
Substituting $\lambda_{1}=0$, we get
\begin{align*}
-L^{\dagger}=\begin{bmatrix}
 x^{(1)}|& 
X_{n,n-1}\end{bmatrix} \begin{bmatrix} 0 & 0 \\ 0 & -D^{-1} \end{bmatrix} \begin{bmatrix}
y^{(1)^{H}}\\Y_{n-1,n} \end{bmatrix},
\end{align*}
where $D$ is the diagonal matrix containing eigenvalues of $L$ except `$0$' eigenvalue. Then, taking exponential of the negated Laplacian matrix, at a larger time instant, we have the final expression for matrix exponential as: $ \underset{t \rightarrow \infty}{lim} ~ e^{-L^{\dagger}t}= \alpha x^{(1)}y^{(1)^{H}}$. Since $x^{(1)} \text{ and } y^{(1)^{{H}}} \in \operatorname{span}\left \{  \mathbb{1}_{n}\right \}$, the product of the eigenvectors is always real and positive. Hence, $-L^{\dagger}$ is rEEP.  

 $(iv) \implies (iii)$: The Laplacian pseudoinverse matrix satisfies the two inherent properties of $L$. The first property is given in Lemma~\ref{lem:invariant null space}, and the second one states that all eigenvalues, excluding the zero eigenvalue, lie in open RHP (shown in $(ii) \implies (iii)$).
 Hence, it can be proven on the same lines as proved in $(ii) \implies (i)$ direction. 
\end{proof} 
\begin{corollary}\label{Thm:undirected n/w}
Consider an undirected graph which is unsigned and connected such that $L \text{ and }L^{\dagger} \in \mathbb{C}^{n \times n}$. Then, the following statements are equivalent:
\begin{enumerate}[label=(\roman*)]
\item $-L$ is rEEP,
\item $-L^{\dagger}$ is $psd$ of corank 1,
\item $-L^{\dagger}$ is rEEP.
\end{enumerate}
\end{corollary}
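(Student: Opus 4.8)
The plan is to treat this corollary as a direct specialization of Theorem \ref{Thm:signed}, observing that an unsigned undirected connected graph is precisely a signed graph in which every edge weight is non-negative. First I would verify that such a graph satisfies every hypothesis of Theorem \ref{Thm:signed}: because the graph is undirected, the Laplacian $L=D_{out}-A$ is complex symmetric, it has zero row and column sums, and hence $\mathbb{1}_n$ spans its kernel whenever the zero eigenvalue is simple. Connectivity together with Lemma \ref{Thm:L undirected} guarantees that this zero eigenvalue is indeed simple, so $L$ is $psd$ of corank $1$ and the nonzero spectrum lies in the open RHP.

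Next I would confirm the eigenvector condition needed to route through Theorem \ref{stronglemma}. Setting $B=dI-L$ for a suitable $d>0$ with $\rho(B)=d$, the dominant right eigenvector lies in $\operatorname{span}\{\mathbb{1}_n\}$ and is therefore real and positive; thus $\Im(\mathbf{x})=\mathbb{0}$ and the requirement $\Re(\mathbf{x})\geq|\Im(\mathbf{x})|$ holds trivially, as does the analogous condition on the left eigenvector. This confirms that the full chain of equivalences proved in Theorem \ref{Thm:signed} — that $L$ is $psd$ of corank $1$ $\iff$ $-L$ is rEEP $\iff$ $L^{\dagger}$ is $psd$ of corank $1$ $\iff$ $-L^{\dagger}$ is rEEP — transfers verbatim to the unsigned setting, since the unsigned Laplacian is a particular instance of the complex symmetric signed Laplacian already handled.

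Finally, I would read off the three equivalences of the corollary as a subchain of the above. Statement (i), that $-L$ is rEEP, is the second link; statement (iii), that $-L^{\dagger}$ is rEEP, is the fourth link; and the pseudoinverse semidefiniteness in statement (ii) follows from the spectral identity $\operatorname{spec}(-L^{\dagger})=\{0,-1/\lambda_2,\ldots,-1/\lambda_n\}$ employed in the proof of Theorem \ref{Thm:signed}, combined with Lemma \ref{lem:invariant null space}, which pins the kernel eigenvector of $L^{\dagger}$ to $\operatorname{span}\{\mathbb{1}_n\}$. Because inverting a positive nonzero spectrum preserves positivity and the corank is preserved under pseudoinversion, the $psd$ corank-$1$ structure passes to $L^{\dagger}$.

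The main obstacle I anticipate is purely one of bookkeeping with the sign convention: the corollary phrases the pseudoinverse condition in statement (ii) using $-L^{\dagger}$, whereas Theorem \ref{Thm:signed} states it for $L^{\dagger}$, so I would need to argue carefully how the corank-$1$ positive-semidefinite structure is meant under negation and through the inversion of the nonzero eigenvalues. Once this is reconciled and the unsigned graph is recognized as a special case of the already-treated signed graph, the result is essentially immediate and requires no new machinery beyond Theorem \ref{Thm:signed}, Lemma \ref{Thm:L undirected}, and Lemma \ref{lem:invariant null space}.
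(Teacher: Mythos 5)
Your proposal is correct and takes essentially the same route as the paper: the paper likewise disposes of the corollary by invoking Lemma~\ref{Thm:L undirected} for the equivalence of rEEP with a simple `0' eigenvalue, the spectral identity in Eqn.~\eqref{eq:spec(Linv)}, and the observation that the unsigned connected case inherits the structure of Thm.~\ref{Thm:signed}, so that ``this can be proved on the same lines.'' Your flagged bookkeeping concern is well taken and worth recording: by Eqn.~\eqref{eq:spec(Linv)} the nonzero eigenvalues of $-L^{\dagger}$ have negative real parts, so under the paper's definition of $psd$ it is $L^{\dagger}$ (as written in Thm.~\ref{Thm:signed}(iii)), not $-L^{\dagger}$, that is $psd$ of corank 1, i.e.\ statement (ii) of the corollary carries a sign typo that the paper's own terse proof never reconciles.
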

\begin{proof}
Using Lemma~\ref{Thm:L undirected}, $-L$ being rEEP is equivalent to $L$ having a simple `0' eigenvalue. Since the spectrum of Laplacian pseudoinverse matrix is given in Eqn.\eqref{eq:spec(Linv)} and Thm.\ref{Thm:signed} already proves that the property of rEEP is conserved under the operation of pseudoinverse. Thus, the inherent properties of $L \text{ and }L^{\dagger}$ remain same as in Thm.\ref{Thm:signed}. Hence, this can be proved on the same lines as in Thm.\ref{Thm:signed}.
\end{proof}
%We now explore unsigned weight-balanced digraphs in the next subsection.
\subsection{Unsigned directed networks}
We assume the digraph $\mathcal{G}(A)$ to be strongly connected and weight-balanced. The rEEP property, which enables consensus, is a consequence of the network's connectivity. The assumption of weight-balance aids in proving the rEEP property, as the left eigenvector corresponding to the `$0$' eigenvalue in a non weight-balanced digraph is indeterminate in the case of a complex-valued digraph. 
\begin{theorem}\label{Thm3:unsigned digraph}
 Consider an unsigned digraph $\mathcal{G}(A)$ such that $L \text{ and }L^{\dagger} \in \mathbb{C}^{n \times n}$ corresponds to a  strongly connected and weight-balanced digraph. Then, the following statements are equivalent:
\begin{enumerate}[label=(\roman*)]
\item -$L$ is rEEP.
\item -$L^{\dagger}$ has a simple `0' eigenvalue.
\item -$L^{\dagger}$ is rEEP.
\item $(L^{\dagger})_{s}$ is $psd$ of corank 1.
\end{enumerate} \end{theorem}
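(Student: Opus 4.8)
The plan is to establish the four equivalences in two stages: first pin down the purely spectral core $(i)\Leftrightarrow(ii)\Leftrightarrow(iii)$ by the same machinery used in Thm.~\ref{Thm:signed}, and only afterwards attach the symmetric-part statement $(iv)$ through an energy (Lyapunov) argument along the flow \eqref{eq: pseudo-motivation2}. The backbone throughout is weight-balance: it gives $L\mathbb{1}_n=\mathbb{0}_n$ and $\mathbb{1}_n^{\top}L=\mathbb{0}_n^{\top}$, and by Lemma~\ref{lem:invariant null space} the same holds for $L^{\dagger}$, so for both $L$ and $L^{\dagger}$ the left and right null vectors are $\mathbb{1}_n$, which satisfies $\Re(\mathbb{1}_n)=\mathbb{1}_n\ge|\Im(\mathbb{1}_n)|=\mathbb{0}_n$. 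This is exactly the dominant-eigenvector hypothesis needed to invoke Thm.~\ref{stronglemma}.

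For $(i)\Leftrightarrow(ii)$ I would combine Lemma~\ref{Thm:L digraph} with the spectrum relation \eqref{eq:spec(Linv)}. Lemma~\ref{Thm:L digraph} equates ``$-L$ rEEP'' with ``$L$ has a simple $0$ eigenvalue''; since the map $\lambda_j\mapsto-1/\lambda_j$ in \eqref{eq:spec(Linv)} is a bijection between the nonzero eigenvalues of $L$ and those of $-L^{\dagger}$, it neither creates nor destroys the zero eigenvalue, so $L$ has a simple $0$ eigenvalue iff $-L^{\dagger}$ does. For $(ii)\Leftrightarrow(iii)$ I would shift: set $B=dI-L^{\dagger}$ with $d>0$ chosen as $\rho(B)$, so that $B\mathbb{1}_n=d\,\mathbb{1}_n$ makes $d$ a simple dominant eigenvalue whose right and (by weight-balance) left eigenvectors lie in $\operatorname{span}\{\mathbb{1}_n\}$. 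Then $B$ has the strong complex PF property, hence $B\in\mathcal{O}$ and $B$ is rEP, and Thm.~\ref{stronglemma} yields ``$-L^{\dagger}$ is rEEP''; reading the same chain backwards gives the converse. This closes $(i)\Leftrightarrow(ii)\Leftrightarrow(iii)$.

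To bring in $(iv)$ I would differentiate the energy $\|x(t)\|^2$ along $\dot x=-L^{\dagger}x$, obtaining $\tfrac{d}{dt}\|x\|^2=-2\,x^H(L^{\dagger})_s x$, and note that the relations $L^{\dagger}\mathbb{1}_n=\mathbb{0}_n$ and $\mathbb{1}_n^{\top}L^{\dagger}=\mathbb{0}_n^{\top}$ from Lemma~\ref{lem:invariant null space} give $(L^{\dagger})_s\mathbb{1}_n=\mathbb{0}_n$, so $\mathbb{1}_n$ always lies in the kernel of $(L^{\dagger})_s$ and the corank is at least one. The implication $(iv)\Rightarrow(iii)$ is then clean: if $(L^{\dagger})_s$ is $psd$ of corank $1$ with kernel $\operatorname{span}\{\mathbb{1}_n\}$, the energy is non-increasing and strictly decreasing off $\operatorname{span}\{\mathbb{1}_n\}$, so a LaSalle-type argument forces every trajectory to converge into $\operatorname{span}\{\mathbb{1}_n\}$; since both null vectors are $\mathbb{1}_n$, the limit is the real positive rank-one projector $\tfrac1n\mathbb{1}_n\mathbb{1}_n^{\top}$, and $\Re(e^{-L^{\dagger}t})$ is eventually positive, giving rEEP.

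The hard part will be the reverse implication $(iii)\Rightarrow(iv)$, i.e.\ deducing that the \emph{symmetric part} $(L^{\dagger})_s$ is $psd$ from the eventual positivity of $-L^{\dagger}$. Eventual positivity constrains only the \emph{spectrum} of $-L^{\dagger}$ (zero simple, the rest in the open left-half plane), whereas $psd$-ness of $(L^{\dagger})_s$ is a statement about its field of values, and for a non-normal $L^{\dagger}$ a purely transient growth of $\|x\|^2$ is compatible with eventual consensus. Closing this gap must therefore use the digraph structure beyond the spectrum: I would route through $(i)$, using that for a strongly connected weight-balanced unsigned graph the mirror Laplacian $L_s$ is a connected undirected Laplacian (hence $psd$ of corank $1$), and attempt to transfer this to $(L^{\dagger})_s$. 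Because symmetrization and pseudoinversion do not commute, this transfer is the genuine obstacle; I expect it to require the unsigned and weight-balanced hypotheses in an essential way, and it is presumably the reason the stated conditions are only sufficient, rather than necessary, for general digraphs.
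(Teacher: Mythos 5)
Your handling of the core equivalences $(i)\Leftrightarrow(ii)\Leftrightarrow(iii)$ is correct and is essentially the paper's own route: the paper also obtains $(i)\Leftrightarrow(ii)$ from Lemma~\ref{Thm:L digraph} combined with the spectral correspondence in Eqn.~\eqref{eq:spec(Linv)}, and obtains $(ii)\Rightarrow(iii)$ from the Jordan-form limit $\lim_{t\to\infty}e^{-L^{\dagger}t}=\alpha x^{(1)}y^{(1)^{H}}$ with $x^{(1)},y^{(1)}\in\operatorname{span}\{\mathbb{1}_n\}$ via Lemma~\ref{lem:invariant null space}, deferring the converse to the shift construction of Thm.~\ref{Thm:signed} --- the same $B=dI-L^{\dagger}$ device you invoke through Thm.~\ref{stronglemma}. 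Your LaSalle/energy argument for $(iv)\Rightarrow(iii)$ is also valid, and is in fact tighter than the paper's corresponding step $(iv)\Rightarrow(ii)$: positive semi-definiteness of $(L^{\dagger})_s$ controls the field of values, so every eigenvalue of $L^{\dagger}$ has nonnegative real part, any eigenvector of a purely imaginary eigenvalue must lie in $\ker\,(L^{\dagger})_s=\operatorname{span}\{\mathbb{1}_n\}$, and $\mathbb{1}_n^{\top}L^{\dagger}=\mathbb{0}_n^{\top}$ rules out a Jordan block at zero.

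The genuine gap is the one you flag yourself: $(iii)\Rightarrow(iv)$ (equivalently $(ii)\Rightarrow(iv)$) is never proved, so only three of the four claimed equivalences are established. Your reason for stalling is mathematically sound --- for a non-normal matrix, a spectrum in the closed right-half plane with a simple zero does \emph{not} imply a psd symmetric part (a $2\times 2$ upper-triangular matrix with small positive diagonal and a large off-diagonal entry already has spectrum in the open RHP yet an indefinite symmetric part) --- but the theorem asserts the implication, so a proof must supply the missing structural argument, and yours ends with ``I would attempt to transfer.'' It is worth knowing how the paper closes this direction: in one line, by asserting the identity $\Re\left(\lambda_i(L^{\dagger})\right)=\lambda_i\left((L^{\dagger})_s\right)$, an identity that holds for normal matrices but is false in general and is stated there without proof. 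In other words, the step you could not close is, in the paper, assumed rather than derived; your diagnosis that spectral data alone cannot control $(L^{\dagger})_s$ and that weight-balance must enter in an essential way (e.g.\ via normality of $L^{\dagger}$ or a mirror-graph argument) is accurate, but a diagnosis is not a proof, so the proposal as submitted is incomplete.
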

\begin{proof} 
$(i)\implies(ii)$: From Lemma~\ref{Thm:L digraph}, if $-L$ is rEEP, then it equivalently has a simple `$0$' eigenvalue. Hence, we have the spectrum of $L^\dagger$ given in Eqn.\eqref{eq:spec(Linv)}. Consequently, $L^{\dagger}$ has a simple `0' eigenvalue. 

$(ii)\implies(i)$: We have, $L^{\dagger}$ with a simple `0' eigenvalue. Following Eqn.\eqref{eq:spec(Linv)}, $L$ has a simple `0' eigenvalue. Applying Lemma~\ref{Thm:L digraph}, $-L$ is rEEP.

$(ii)\implies(iii)$: This can be proved using Jordan canonical form. Substituting $\lambda_{1}=0$ in Eqn.\eqref{eq:Linv jordan}, we obtain $\underset{t \rightarrow \infty}{lim} ~ e^{-L^{\dagger}t}= \alpha x^{(1)}y^{(1)^{H}}$. The eigenvectors corresponding to zero eigenvalue are same (using Lemma~\ref{lem:invariant null space}). Hence, $-L^{\dagger}$ is rEEP. The reverse direction can also be proved as in Thm.\ref{Thm:signed}.

\looseness=-1
$(ii)\implies(iv)$: The symmetric part of $L^\dagger$, $(L^{\dagger})_{s}$ follows ${\Re}\left(\lambda_i(L^{\dagger})\right) {=} \lambda_i\left((L^{\dagger})_{s}\right) \text{ which implies }  L^{\dagger}_s$ is $psd$ of corank 1. 

$(iv)\implies(ii)$: Since $L^{\dagger}_{s}$ is $psd$ of corank $1$ and it follows ${\Re}\left(\lambda_i(L^{\dagger})\right) {=} \lambda_j\left((L^{\dagger})_{s}\right)$, it implies that all the eigenvalues of $L^{\dagger}$ lie in open RHP. Hence, $-L^{\dagger}$ has a simple `$0$' eigenvalue.
\end{proof} Moreover, the assumption of weight-balance guarantees average consensus, where the final convergence value is achieved exactly as an average of all initial opinions. %The final value of convergence is real-valued asymptotically.  
Additionally, the property of rEEP implies that all agents in a network have equal importance in the network. 
Next, we provide a consequence of rEEP property. 
\begin{corollary}\label{rem:consensus}
Through the results presented in this section, it follows that if $-L$ is rEEP, then the Laplacian and the pseudoinverse Laplacian flows achieve consensus for every undirected graphs and weight-balanced unsigned digraphs.
\end{corollary}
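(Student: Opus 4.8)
The plan is to combine the spectral characterization of rEEP established earlier in this section with the flow-equivalence proved in Theorem~\ref{Thm:consensus}. The starting hypothesis is that $-L$ is rEEP. For an undirected connected graph, Lemma~\ref{Thm:L undirected} makes this equivalent to $L$ having a simple zero eigenvalue; for a strongly connected weight-balanced digraph, Lemma~\ref{Thm:L digraph} gives the same equivalence. Combined with the fact that every eigenvalue of $L$ lies in the closed RHP, this places the spectrum of $-L$ as a simple eigenvalue at the origin with all remaining eigenvalues in the open left-half plane.

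First I would show that this eigenvalue structure forces the Laplacian flow to reach consensus. Writing $-L$ in Jordan canonical form and isolating the zero eigenvalue (exactly as in the $(iii)\implies(iv)$ step of Theorem~\ref{Thm:signed}), the remaining modes decay and $\lim_{t\to\infty} e^{-Lt} = \mathbb{1}_n z^H$, where $z$ is the left eigenvector for the zero eigenvalue. Hence $x(t) = e^{-Lt}x_0 \to (z^H x_0)\mathbb{1}_n$, a vector with identical entries, which is precisely consensus in the sense of Eqn.~\eqref{eq:sol.x}.

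Next I would transfer this conclusion to the pseudoinverse flow. By Corollary~\ref{Thm:undirected n/w} in the undirected case, and by Theorem~\ref{Thm3:unsigned digraph} in the digraph case, $-L$ being rEEP is equivalent to $-L^{\dagger}$ being rEEP. Concretely, Lemma~\ref{lem:invariant null space} guarantees that the zero-eigenvalue eigenvectors of $L$ and $L^{\dagger}$ coincide and lie in $\operatorname{span}\{\mathbb{1}_n\}$, while Eqn.~\eqref{eq:spec(Linv)} shows the nonzero spectrum of $L^{\dagger}$ consists of the reciprocals $-1/\lambda_j$, which again lie in the open left-half plane. The same Jordan-form argument then yields $\lim_{t\to\infty} e^{-L^{\dagger}t} = \mathbb{1}_n z^H$, so the pseudoinverse flow also reaches consensus; alternatively, this last step follows immediately from Theorem~\ref{Thm:consensus}.

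The step I expect to require the most care is verifying that the limiting rank-one operator genuinely has the form $\mathbb{1}_n z^H$ with constant, consensus-aligned output for arbitrary complex initial data, i.e.\ that the transient modes decay and only the direction $\operatorname{span}\{\mathbb{1}_n\}$ survives for both $L$ and $L^{\dagger}$ simultaneously. This is where the rEEP hypothesis does the real work, since it is precisely what rules out additional marginal or unstable modes and certifies that the surviving projection coincides with the all-ones direction in both flows.
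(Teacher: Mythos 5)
Your proposal is correct and takes essentially the same route as the paper: the corollary is stated there without a separate proof precisely because it is the chain you spell out — the rEEP-to-simple-zero-eigenvalue equivalences (Lemma~\ref{Thm:L undirected}, Lemma~\ref{Thm:L digraph}, Theorem~\ref{Thm:signed}, Theorem~\ref{Thm3:unsigned digraph}), the Jordan-form limit $\lim_{t\to\infty}e^{-Lt}=\mathbb{1}_n z^H$, and the transfer to the pseudoinverse flow via Lemma~\ref{lem:invariant null space}, Eqn.~\eqref{eq:spec(Linv)} and Theorem~\ref{Thm:consensus}. The only small correction: since ``every undirected graphs'' includes signed ones, in that case you should invoke Theorem~\ref{Thm:signed} rather than Lemma~\ref{Thm:L undirected} (which the paper states only for unsigned undirected graphs); the argument is otherwise unchanged.
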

It follows from Corollary \ref{rem:consensus} that in general, rEEP is only a sufficient condition for consensus. A network of agents may converge to consensus even when $-L$ is not rEEP (an example is shown in Fig.\ref{fig:counter}). These cases may be explored in the future.
 \begin{figure}[H]
    \begin{subfigure}{0.2\textwidth}
        \centering
        \includegraphics[scale=0.17]{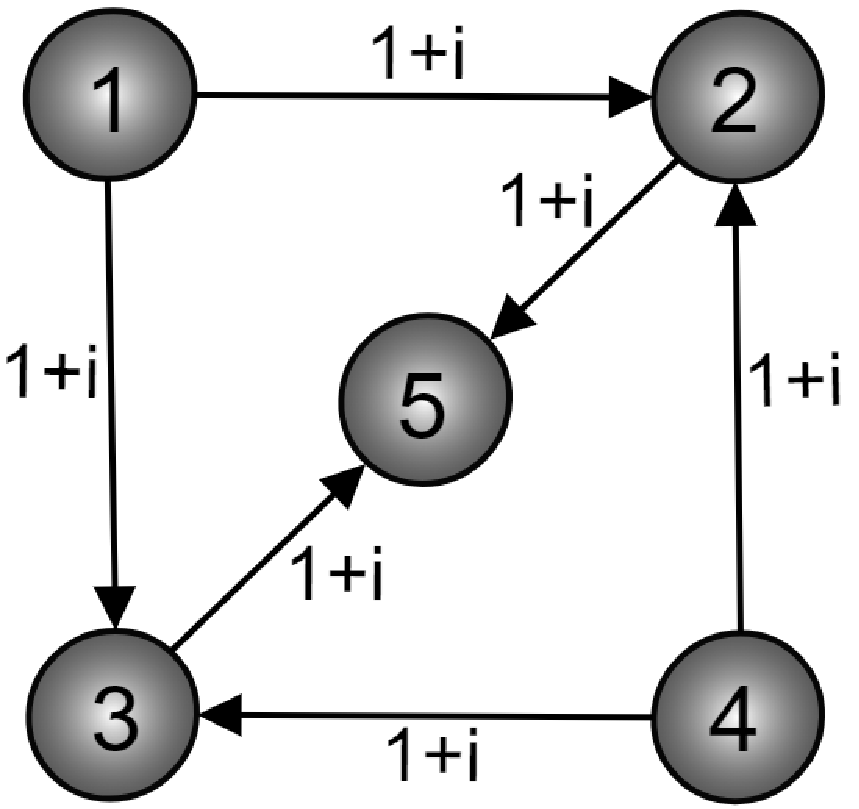}
        \caption{\small A weakly connected digraph}
        \label{fig:weaklydigraph}
        \end{subfigure}
   % \hspace{0.27cm} % Adjust the space between the two figures
    \begin{subfigure}{0.2\textwidth}
        \centering
        \includegraphics[scale=0.556]{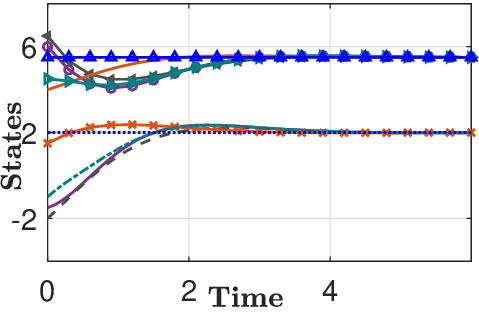}
        \caption{Laplacian flow}
        \label{fig:Lflowcounter}
          \end{subfigure}     
    \caption{ \small For the given network, $-L$ is not rEEP, yet consensus is achieved. The trajectories above and below in Fig.\ref{fig:Lflowcounter} denote the real and imaginary parts of the states, respectively.} %The exact values of $e^{-Lt}$ are not provided here in the interest of space.}
    \label{fig:counter}
\end{figure}
Now, we present the following numerical examples to validate the aforementioned results. Note that, for brevity, all matrices in the examples have been rounded to two decimal places.
\begin{example}\label{ex:unsigned}\emph{(Synthetic network)}
Here, the given unsigned digraph is strongly connected and weight-balanced.
  \begin{figure}[H]
        \centering
        \includegraphics[scale=0.165]{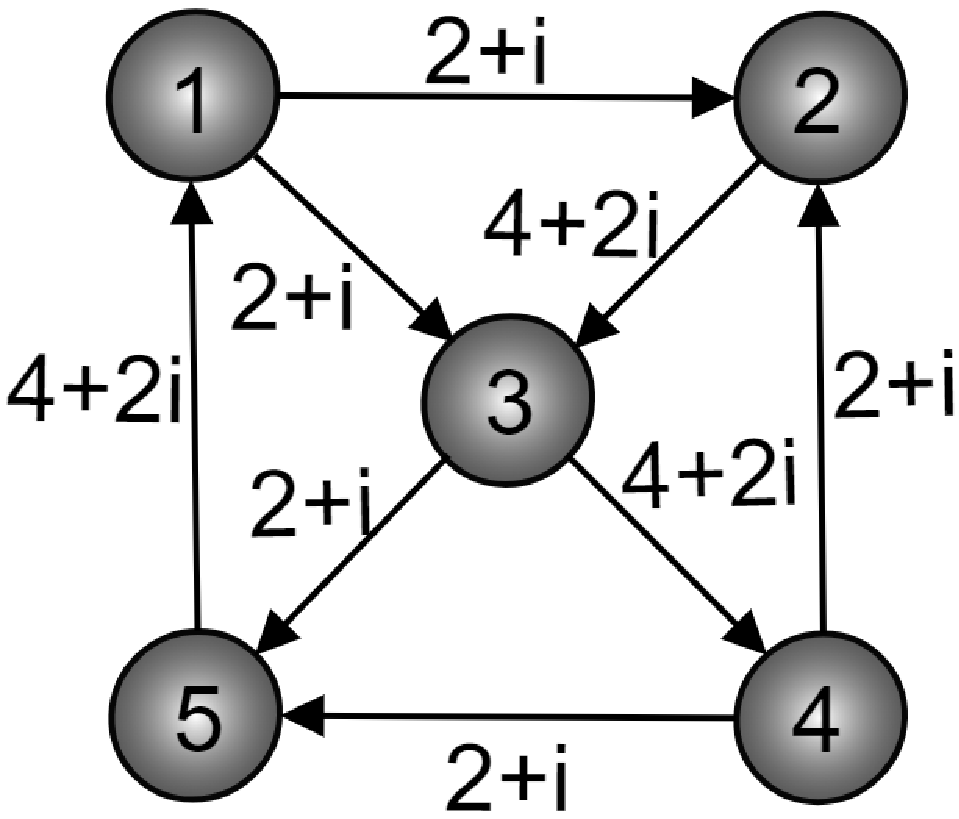}
        \caption{Strongly connected weight-balanced digraph}
        \label{fig:wt.balanced digraph}
\end{figure}%\vspace{-10mm}
The matrix exponential of the negated Laplacian matrix is
\begin{equation*}
e^{-L_{1}t} = \renewcommand{\arraystretch}{1} % Adjust vertical spacing
\setlength{\arraycolsep}{2pt}    % Adjust column spacing
\scriptsize
\begin{bmatrix}
 0.2 + 0.01i & 0.2 & 0.19 &0.21& 0.2-0.01i \\ 
0.22 & 0.2 & 0.2+0.01i & 0.2-0.01i & 0.19-0.01i\\ 
0.2-0.01i &0.2 &  0.21 &0.19& 0.2+0.01i \\
0.18  & 0.2& 0.2 - 0.01i & 0.2+0.01i & 0.22 \\
 0.2 &  0.2 &  0.2 &  0.2&  0.2 
\end{bmatrix}
\normalsize
\end{equation*}
%\looseness=-1
 Here, $\operatorname{spec}(L_{1})=\left \{ 0,3.67+5.14i, 5+5i,6.32-0.14i,7+i \right \}$. $\text{For } d \geq 6$, we note that $B=dI-L$ is rEEP and $B \in \mathcal{O}$. So, $-L_{1}$ has a simple `0' eigenvalue. Further, $-L_1$ is equivalently rEEP.
The eigenvalues of the Laplacian pseudoinverse lie in the closed RHP as
$\operatorname{spec}(L_{1}^{\dagger}) = \left\{
0, 0.091 - 0.128i, 0.1 - 0.1i, 0.16 + 0.003i, 0.14 - 0.02i \right\}$. Then, the matrix exponential for the Laplacian pseudoinverse at $t=13$ is given as: 
\begin{equation*}
\renewcommand{\arraystretch}{1} % Adjust vertical spacing
\setlength{\arraycolsep}{2pt}    % Adjust column spacing
\scriptsize
\begin{bmatrix}
0.23+0.11i  & 0.13-0.05i & 0.13-0.07i & 0.25+0.01i & 0.26-0.01i\\ 
0.24 + 0.04i & 0.25 + 0.1i & 0.01-0.08i & 0.26-0.01i & 0.25-0.05i\\ 
0.26 - 0.01i &  0.24 - 0.01i &  0.36+ 0.16i & 0.13- 0.07i  & 0.01- 0.08i \\
0.02- 0.13i &  0.24- 0.01i &  0.26- 0.01i & 0.23 + 0.11i  & 0.24 + 0.04i \\
0.24 - 0.01i  & 0.13- 0.05i &   0.24- 0.01i  & 0.13- 0.05i & 0.25+0.1i 
\end{bmatrix}
\normalsize
\end{equation*}
\looseness=-1 Thus, we observe that both the negated Laplacian and Laplacian pseudoinverse matrices are rEEP (as proved in Thm.\ref{Thm3:unsigned digraph}). 
\end{example}
\begin{example}\label{ex:signed}\emph{(Real-world power system network)}
We consider signed Laplacian matrix corresponding to a case 10ba network imported from MATPOWER software. 
\begin{figure}[H]
    \centering
    \begin{subfigure}{0.2\textwidth}
    \centering
        \includegraphics[scale=0.61]{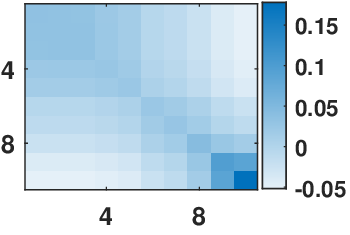}
        \caption{$\Re(L^{\dagger})$}
        \label{fig:Linv_realcase10}
        \end{subfigure}
    \hspace{0.34cm} % Adjust the space between the two figures
    \begin{subfigure}{0.2\textwidth}
    \centering
\includegraphics[scale=0.61]{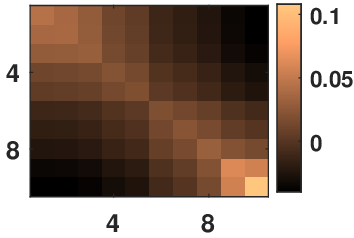}
        \caption{$\Im(L^{\dagger})$}
        \label{fig:Linv_imagcase10}
          \end{subfigure}     
    \caption{\small Heatmaps showing the matrix structure of $L^\dagger$ for real and imaginary parts respectively in an IEEE 10 bus network (the number here denotes the total nodes)}
    \label{figs:case10 realEEP}
\end{figure}
\begin{figure}[H]
    \centering
    \begin{subfigure}{0.2\textwidth}
    \centering
        \includegraphics[scale=0.825]{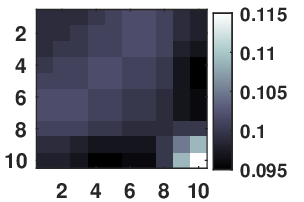}
        \raggedright
        \caption{$\Re(e^{-Lt})$}
        \label{fig:L_realeep case10}
        \end{subfigure}
    \hspace{0.56cm} % Adjust the space between the two figures
    \begin{subfigure}{0.2\textwidth}
    \centering
\includegraphics[scale=0.8]{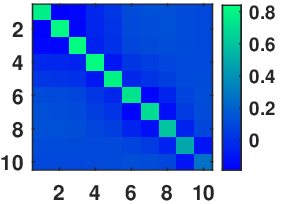}
\raggedright
        \caption{$\Re(e^{-L^{\dagger}t}$)}
        \label{fig:Linv_realeep case10}
          \end{subfigure}     
    \caption{\small The matrix structure of $\Re(e^{-Lt})$ and $\Re(e^{-L^{\dagger}t}$) at $t=1$ and $t=15$ respectively. Values on the colorbar represent the entire range that encompasses all the values of a matrix, and all of these values are positive implying rEEP property.}
    \label{figs:case10 realEEP}
\end{figure}
The matrix structure of $L$ corresponding to the 10-bus network closely resembles to that of the $L^\dagger$. The exact heat maps have been omitted for brevity.

Since the spectrum of the Laplacian matrix corresponding to the 10-bus network has a simple zero eigenvalue, the negated Laplacian matrix as well as negated Laplacian pseudoinverse is rEEP.
Fig.\ref{figs:case10 realEEP} illustrates the approximated values of the matrix exponential of the negated Laplacian matrices, as well as their Laplacian pseudoinverses.
Thus, it is evident from Fig.\ref{figs:case10 realEEP} that the rEEP property is preserved under the operation of pseudoinverse for Laplacian matrices, as proved in Thm.\ref{Thm:signed}. 
\end{example}
\begin{example}\label{ex:RLexample}\emph{(Impedance network)}
%\looseness=-1
Here, we consider an impedance network with shunt inductors connecting each node to ground in Fig.\ref{fig:z n/w}. 
The following network is connected, and the spectrum of corresponding $L$ contains a simple `0' eigenvalue. This network corresponds to a signed Laplacian matrix. 
\tikzset{every picture/.style={line width=0.56pt}} %set default line width to 0.75pt        
\begin{tikzpicture}[x=0.5pt,y=0.5pt,yscale=-1,xscale=1]\label{fig:z n/w}
%uncomment if require: \path (0,293); %set diagram left start at 0, and has height of 293
%Shape: Inductor (Air Core) [id:dp7736825365435145] 
\draw  [line width=1.5]  (215,41) -- (229.4,41) .. controls (229.61,34.96) and (231.61,29.79) .. (234.43,27.99) .. controls (237.26,26.18) and (240.35,28.1) .. (242.2,32.83) .. controls (243.63,36.51) and (244.21,41.28) .. (243.8,45.9) .. controls (243.8,47.71) and (243.08,49.17) .. (242.2,49.17) .. controls (241.32,49.17) and (240.6,47.71) .. (240.6,45.9) .. controls (240.19,41.28) and (240.77,36.51) .. (242.2,32.83) .. controls (243.86,28.9) and (246.18,26.67) .. (248.6,26.67) .. controls (251.02,26.67) and (253.34,28.9) .. (255,32.83) .. controls (256.43,36.51) and (257.01,41.28) .. (256.6,45.9) .. controls (256.6,47.71) and (255.88,49.17) .. (255,49.17) .. controls (254.12,49.17) and (253.4,47.71) .. (253.4,45.9) .. controls (252.99,41.28) and (253.57,36.51) .. (255,32.83) .. controls (256.66,28.9) and (258.98,26.67) .. (261.4,26.67) .. controls (263.82,26.67) and (266.14,28.9) .. (267.8,32.83) .. controls (269.23,36.51) and (269.81,41.28) .. (269.4,45.9) .. controls (269.4,47.71) and (268.68,49.17) .. (267.8,49.17) .. controls (266.92,49.17) and (266.2,47.71) .. (266.2,45.9) .. controls (265.79,41.28) and (266.37,36.51) .. (267.8,32.83) .. controls (269.65,28.1) and (272.74,26.18) .. (275.57,27.99) .. controls (278.39,29.79) and (280.39,34.96) .. (280.6,41) -- (295,41) ;
%Shape: Resistor [id:dp8932617178755298] 
\draw  [color={rgb, 255:red, 25; green, 23; blue, 25 }  ,draw opacity=1 ][line width=1.5]  (135,41) -- (149.4,41) -- (152.6,31.79) -- (159,50.21) -- (165.4,31.79) -- (171.8,50.21) -- (178.2,31.79) -- (184.6,50.21) -- (191,31.79) -- (197.4,50.21) -- (200.6,41) -- (215,41) ;
%Straight Lines [id:da3981351040670884] 
\draw [line width=1.5]    (295,41) -- (358,41) ;
%Straight Lines [id:da9516053637082629] 
\draw [line width=1.5]    (106.09,41.4) -- (135,41) ;
%Shape: Capacitor [id:dp5121653660463716] 
\draw  [line width=1.5]  (106.91,121.4) -- (107.52,157.39) (122.9,165.13) -- (92.41,165.65) (122.77,157.13) -- (92.27,157.65) (107.65,165.39) -- (108.26,201.38) ;
%Straight Lines [id:da7874077817488518] 
\draw [line width=1.5]    (108.26,201.38) -- (441,200.97) ;
%Straight Lines [id:da10815050516353542] 
\draw    (438,41) -- (478.09,41.26) ;
%Straight Lines [id:da31771864654395876] 
\draw    (465.09,121.26) -- (491.09,121.26) ;
%Straight Lines [id:da2820610633924552] 
\draw    (471.09,129.26) -- (486.09,129.26) ;
%Straight Lines [id:da9497662212018936] 
\draw [line width=1.5]    (106.09,41.4) -- (57.2,42.28) ;
%Straight Lines [id:da631094298990581] 
\draw    (43.99,122.24) -- (69.99,122.31) ;
%Straight Lines [id:da15888672549522487] 
\draw    (49.97,130.26) -- (64.97,130.3) ;
%Straight Lines [id:da2841408564745773] 
\draw    (284.09,280.54) -- (310.09,280.54) ;
%Straight Lines [id:da3689060528511938] 
\draw    (290.09,288.54) -- (305.09,288.54) ;
%Straight Lines [id:da7427234910864196] 
\draw [line width=0.75]    (321.09,121.26) -- (347.09,121.26) ;
%Straight Lines [id:da3835892629666997] 
\draw [line width=0.75]    (327.09,129.26) -- (342.09,129.26) ;
%Shape: Resistor [id:dp6410403107594178] 
\draw  [line width=1.5]  (106.91,41.4) -- (106.91,55.8) -- (115.89,59) -- (97.93,65.4) -- (115.89,71.8) -- (97.93,78.2) -- (115.89,84.6) -- (97.93,91) -- (115.89,97.4) -- (97.93,103.8) -- (106.91,107) -- (106.91,121.4) ;
%Shape: Resistor [id:dp24021281653800242] 
\draw  [line width=1.5]  (295.82,41) -- (295.82,55.4) -- (305.26,58.6) -- (286.39,65) -- (305.26,71.4) -- (286.39,77.8) -- (305.26,84.2) -- (286.39,90.6) -- (305.26,97) -- (286.39,103.4) -- (295.82,106.6) -- (295.82,121) ;
%Shape: Capacitor [id:dp10827462830944756] 
\draw  [line width=1.5]  (295.82,121) -- (296.43,156.99) (311.82,164.73) -- (281.32,165.25) (311.68,156.73) -- (281.19,157.25) (296.57,164.99) -- (297.18,200.98) ;
%Shape: Inductor (Air Core) [id:dp9033970718475794] 
\draw  [line width=0.75]  (58.41,122.31) -- (58.29,107.91) .. controls (52.24,107.75) and (47.06,105.8) .. (45.23,102.98) .. controls (43.4,100.17) and (45.3,97.07) .. (50.01,95.18) .. controls (53.68,93.72) and (58.44,93.1) .. (63.07,93.47) .. controls (64.88,93.45) and (66.35,94.16) .. (66.35,95.04) .. controls (66.36,95.93) and (64.9,96.65) .. (63.1,96.67) .. controls (58.48,97.12) and (53.71,96.58) .. (50.01,95.18) .. controls (46.07,93.55) and (43.82,91.25) .. (43.8,88.83) .. controls (43.78,86.41) and (45.99,84.08) .. (49.9,82.38) .. controls (53.57,80.92) and (58.33,80.3) .. (62.96,80.67) .. controls (64.77,80.66) and (66.24,81.36) .. (66.25,82.24) .. controls (66.25,83.13) and (64.8,83.85) .. (62.99,83.87) .. controls (58.37,84.32) and (53.6,83.78) .. (49.9,82.38) .. controls (45.96,80.75) and (43.71,78.45) .. (43.69,76.03) .. controls (43.67,73.61) and (45.88,71.28) .. (49.79,69.58) .. controls (53.47,68.12) and (58.23,67.5) .. (62.86,67.87) .. controls (64.66,67.86) and (66.13,68.56) .. (66.14,69.44) .. controls (66.15,70.33) and (64.69,71.06) .. (62.88,71.07) .. controls (58.26,71.52) and (53.49,70.98) .. (49.79,69.58) .. controls (45.05,67.77) and (43.11,64.7) .. (44.89,61.86) .. controls (46.67,59.01) and (51.82,56.97) .. (57.86,56.71) -- (57.74,42.31) ;
%Shape: Inductor (Air Core) [id:dp8323072650055503] 
\draw   (478.09,41.26) -- (478.25,55.66) .. controls (484.3,55.79) and (489.49,57.73) .. (491.33,60.54) .. controls (493.17,63.35) and (491.28,66.45) .. (486.58,68.36) .. controls (482.91,69.83) and (478.15,70.47) .. (473.52,70.11) .. controls (471.71,70.13) and (470.24,69.43) .. (470.23,68.55) .. controls (470.22,67.67) and (471.68,66.93) .. (473.48,66.91) .. controls (478.1,66.45) and (482.87,66.97) .. (486.58,68.36) .. controls (490.52,69.98) and (492.78,72.27) .. (492.8,74.69) .. controls (492.83,77.11) and (490.63,79.45) .. (486.73,81.16) .. controls (483.06,82.63) and (478.3,83.27) .. (473.67,82.91) .. controls (471.86,82.93) and (470.39,82.23) .. (470.38,81.35) .. controls (470.37,80.47) and (471.83,79.73) .. (473.63,79.71) .. controls (478.25,79.25) and (483.02,79.77) .. (486.73,81.16) .. controls (490.67,82.78) and (492.93,85.06) .. (492.95,87.49) .. controls (492.98,89.91) and (490.78,92.25) .. (486.88,93.96) .. controls (483.21,95.43) and (478.45,96.07) .. (473.82,95.71) .. controls (472.01,95.73) and (470.54,95.03) .. (470.53,94.15) .. controls (470.52,93.27) and (471.98,92.53) .. (473.78,92.51) .. controls (478.4,92.04) and (483.17,92.57) .. (486.88,93.96) .. controls (491.62,95.76) and (493.58,98.82) .. (491.81,101.67) .. controls (490.03,104.52) and (484.89,106.57) .. (478.85,106.85) -- (479.02,121.25) ;
%Shape: Inductor (Air Core) [id:dp9881629394723574] 
\draw  [line width=0.75]  (332.75,41.27) -- (332.99,55.67) .. controls (339.03,55.77) and (344.23,57.68) .. (346.08,60.48) .. controls (347.94,63.28) and (346.07,66.39) .. (341.37,68.33) .. controls (337.71,69.82) and (332.96,70.48) .. (328.32,70.15) .. controls (326.52,70.18) and (325.04,69.48) .. (325.03,68.6) .. controls (325.01,67.72) and (326.47,66.98) .. (328.27,66.95) .. controls (332.89,66.46) and (337.66,66.96) .. (341.37,68.33) .. controls (345.33,69.92) and (347.59,72.2) .. (347.63,74.62) .. controls (347.67,77.05) and (345.49,79.4) .. (341.59,81.13) .. controls (337.93,82.62) and (333.17,83.28) .. (328.54,82.94) .. controls (326.73,82.97) and (325.26,82.28) .. (325.24,81.4) .. controls (325.23,80.52) and (326.68,79.78) .. (328.49,79.75) .. controls (333.1,79.26) and (337.88,79.76) .. (341.59,81.13) .. controls (345.54,82.72) and (347.81,85) .. (347.85,87.42) .. controls (347.89,89.84) and (345.7,92.2) .. (341.8,93.92) .. controls (338.14,95.41) and (333.39,96.08) .. (328.75,95.74) .. controls (326.95,95.77) and (325.47,95.08) .. (325.46,94.2) .. controls (325.44,93.31) and (326.9,92.57) .. (328.7,92.54) .. controls (333.32,92.05) and (338.09,92.56) .. (341.8,93.92) .. controls (346.56,95.7) and (348.53,98.75) .. (346.77,101.61) .. controls (345.01,104.47) and (339.88,106.55) .. (333.84,106.86) -- (334.09,121.26) ;
%Shape: Resistor [id:dp2506906911565403] 
\draw  [line width=1.5]  (358,41) -- (372.4,41) -- (375.6,32.29) -- (382,49.71) -- (388.4,32.29) -- (394.8,49.71) -- (401.2,32.29) -- (407.6,49.71) -- (414,32.29) -- (420.4,49.71) -- (423.6,41) -- (438,41) ;
%Shape: Resistor [id:dp9369322875374124] 
\draw  [line width=1.5]  (439.35,120.99) -- (439.65,135.38) -- (448.86,138.4) -- (430.71,145.17) -- (449.12,151.19) -- (430.97,157.97) -- (449.39,163.99) -- (431.23,170.76) -- (449.65,176.79) -- (431.5,183.56) -- (440.71,186.57) -- (441,200.97) ;
%Shape: Capacitor [id:dp1254492183891318] 
\draw  [line width=1.5]  (438,41) -- (438.61,76.99) (453.99,84.74) -- (423.5,85.25) (453.86,76.74) -- (423.36,77.25) (438.74,84.99) -- (439.35,120.99) ;
%Shape: Inductor (Air Core) [id:dp11196472040321459] 
\draw   (297.18,200.98) -- (297.35,215.38) .. controls (303.39,215.52) and (308.58,217.46) .. (310.42,220.27) .. controls (312.26,223.07) and (310.37,226.18) .. (305.67,228.09) .. controls (302,229.56) and (297.24,230.2) .. (292.61,229.84) .. controls (290.81,229.86) and (289.33,229.16) .. (289.32,228.28) .. controls (289.31,227.4) and (290.77,226.66) .. (292.57,226.64) .. controls (297.2,226.17) and (301.97,226.7) .. (305.67,228.09) .. controls (309.62,229.7) and (311.87,231.99) .. (311.9,234.41) .. controls (311.92,236.84) and (309.73,239.18) .. (305.82,240.89) .. controls (302.15,242.36) and (297.39,243) .. (292.76,242.64) .. controls (290.96,242.66) and (289.48,241.96) .. (289.47,241.08) .. controls (289.46,240.19) and (290.92,239.46) .. (292.72,239.44) .. controls (297.34,238.97) and (302.12,239.5) .. (305.82,240.89) .. controls (309.76,242.5) and (312.02,244.79) .. (312.05,247.21) .. controls (312.07,249.64) and (309.88,251.98) .. (305.97,253.69) .. controls (302.3,255.16) and (297.54,255.8) .. (292.91,255.44) .. controls (291.11,255.46) and (289.63,254.76) .. (289.62,253.88) .. controls (289.61,252.99) and (291.07,252.26) .. (292.87,252.24) .. controls (297.49,251.77) and (302.27,252.3) .. (305.97,253.69) .. controls (310.72,255.48) and (312.67,258.54) .. (310.9,261.39) .. controls (309.13,264.24) and (303.98,266.3) .. (297.95,266.58) -- (298.11,280.98) ;
%Shape: Circle [id:dp2773648599813803] 
\draw  [fill={rgb, 255:red, 144; green, 19; blue, 254 }  ,fill opacity=1 ] (102.34,41.4) .. controls (102.34,38.87) and (104.39,36.83) .. (106.91,36.83) .. controls (109.43,36.83) and (111.47,38.87) .. (111.47,41.4) .. controls (111.47,43.92) and (109.43,45.96) .. (106.91,45.96) .. controls (104.39,45.96) and (102.34,43.92) .. (102.34,41.4) -- cycle ;
%Shape: Circle [id:dp6787831143327525] 
\draw  [fill={rgb, 255:red, 144; green, 19; blue, 254 }  ,fill opacity=1 ] (290.43,41) .. controls (290.43,38.48) and (292.48,36.43) .. (295,36.43) .. controls (297.52,36.43) and (299.57,38.48) .. (299.57,41) .. controls (299.57,43.52) and (297.52,45.57) .. (295,45.57) .. controls (292.48,45.57) and (290.43,43.52) .. (290.43,41) -- cycle ;
%Shape: Circle [id:dp01969109924682355] 
\draw  [fill={rgb, 255:red, 144; green, 19; blue, 254 }  ,fill opacity=1 ] (292.61,200.98) .. controls (292.61,198.46) and (294.66,196.42) .. (297.18,196.42) .. controls (299.7,196.42) and (301.74,198.46) .. (301.74,200.98) .. controls (301.74,203.51) and (299.7,205.55) .. (297.18,205.55) .. controls (294.66,205.55) and (292.61,203.51) .. (292.61,200.98) -- cycle ;
%Shape: Circle [id:dp5907111807905958] 
\draw  [fill={rgb, 255:red, 144; green, 19; blue, 254 }  ,fill opacity=1 ] (433.43,41) .. controls (433.43,38.48) and (435.48,36.43) .. (438,36.43) .. controls (440.52,36.43) and (442.57,38.48) .. (442.57,41) .. controls (442.57,43.52) and (440.52,45.57) .. (438,45.57) .. controls (435.48,45.57) and (433.43,43.52) .. (433.43,41) -- cycle ;
% Text Node
%\draw (130,71) node [anchor=north west][inner sep=0.75pt]  [font=\large] [align=left] {{$C_{\text{1}}$}};
% Text Node
\draw (121.04,74.8) node [anchor=north west][inner sep=0.75pt]   [align=left] {{$R_{{1}}$}};
 Text Node
%\draw (176,4) node [anchor=north west][inner sep=0.35pt]  [font=\large] [align=left] {\textsubscript{{\fontfamily{ptm}\selectfont 1}}};
% Text Node
\draw (166.4,15.49) node [inner sep=0.55pt]   [align=left] {{$R_2$}};
% Text Node
%\draw (134,148) node [anchor=north west][inner sep=0.75pt]  [font=\large] [align=left] {{\fontfamily{ptm}\selectfont \textsubscript{1}}};
% Text Node
\draw (124.4,152.59) node [anchor=north west][inner sep=0.75pt]   [align=left] {$C_{1}$};
% Text Node
%\draw (271,148) node [anchor=north west][inner sep=0.75pt]  [font=\large] [align=left] {\textsubscript{{\fontfamily{ptm}\selectfont 2}}};
% Text Node
\draw (254.4,152.59) node [anchor=north west][inner sep=0.75pt]   [align=left] {$C_2$};
% Text Node
%\draw (409,70) node [anchor=north west][inner sep=0.85pt]  [font=\large] [align=left] {\textsubscript{{\fontfamily{ptm}\selectfont 3}}};
% Text Node
\draw (399.4,73.59) node [anchor=north west][inner sep=0.75pt]   [align=left] {$C_3$};
% Text Node
%\draw (275,69) node [anchor=north west][inner sep=0.75pt]  [font=\large] [align=left] {\textsubscript{{\fontfamily{ptm}\selectfont 3}}};
% Text Node
\draw (254.4,71.59) node [anchor=north west][inner sep=0.75pt]   [align=left] {$R_{{3}}$};
% Text Node
%\draw (394,4) node [anchor=north west][inner sep=0.75pt]  [font=\large] [align=left] {$C_{\text{1}}$};
% Text Node
\draw (382.4,7.59) node [anchor=north west][inner sep=0.75pt]   [align=left] {$R_4$};
% Text Node
\draw (244.4,15.59) node [inner sep=1.75pt]   [align=left] {$L_1$};
% Text Node
%\draw (251,2) node [anchor=north west][inner sep=0.75pt]  [font=\large] [align=left] {\textsubscript{{\fontfamily{ptm}\selectfont 1}}};
% Text Node
%\draw (414,141) node [anchor=north west][inner sep=0.75pt] [font=\large] [align=left] {$R_5$};
% Text Node
\draw (402.4,149.59) node [anchor=north west][inner sep=0.75pt]   [align=left] {$R_5$};
% Text Node
\draw (274,232) node [anchor=north west][inner sep=-7.75pt]   [align=left] {$\displaystyle \mathcal{L}$};
% Text Node
\draw (26,78) node [anchor=north west][inner sep=-5.95pt]   [align=left] {$\displaystyle \mathcal{L}$};
% Text Node
\draw (496,69) node [anchor=north west][inner sep=0.75pt]   [align=left] {$\displaystyle \mathcal{L}$};
% Text Node
\draw (352,70) node [anchor=north west][inner sep=0.75pt]   [align=left] {$\displaystyle \mathcal{L}$};
\end{tikzpicture}
\end{example}
\section{Simulations}\label{sec:simulations}
We now validate our theoretical findings on synthetic and a standard IEEE benchmark power networks. We examine the dynamical behavior (state evolution) of pseudoinverse Laplacian flows for various network assumptions. It is worth noting that the magnitude of the smallest eigenvalues of the Laplacian and pseudoinverse Laplacian matrix determine the rate of convergence in the corresponding flows respectively. 
Our first set of experiments concerns the unsigned network in Ex.\ref{ex:unsigned} satisfying the assumptions in Thm.\ref{Thm:consensus},\ref{Thm3:unsigned digraph}. As expected, states of both Laplacian and pseudoinverse Laplacian flows attain consensus (Fig.\ref{fig:consensus}). Since the state is complex-valued, we plot the real and imaginary state trajectories separately. 
\begin{figure}[H]
    \begin{subfigure}{0.2\textwidth}
        \includegraphics[scale=0.365]{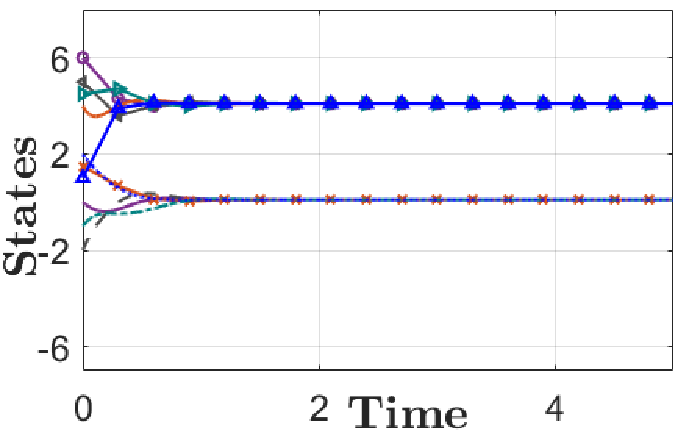}
        \caption{Laplacian flow}
        \label{fig:L_sim wt.balanced_digraph}
    \end{subfigure}
    \hspace{0.67cm} % Adjust the space between the two figures
    \begin{subfigure}{0.2\textwidth}
        \includegraphics[scale=0.344]{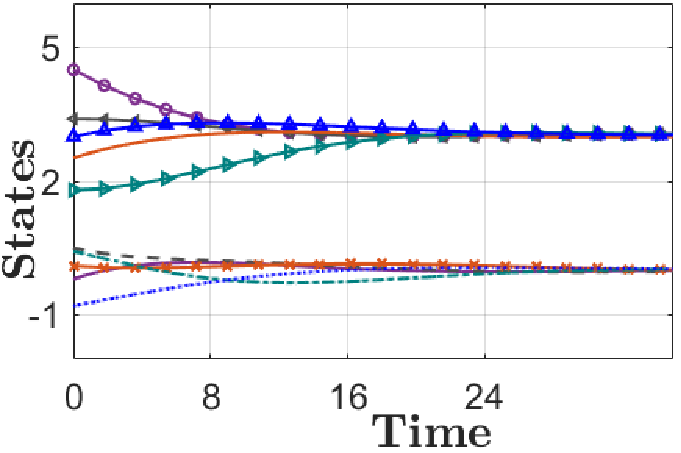}
        \caption{Pseudoinverse flow}
        \label{fig:Linv_sim}
    \end{subfigure}     
    \raggedright
    \caption{\footnotesize Consensus achieved by flow systems for the unsigned digraph in Ex.~\ref{ex:unsigned}. We plot the real and imaginary state trajectories. Observe that the real (upper) and imaginary (lower) states converge to different values.} 
    \label{fig:consensus}
\end{figure}
\begin{figure}[H]
    \begin{subfigure}{0.2\textwidth}
        \centering
        \includegraphics[scale=0.49]{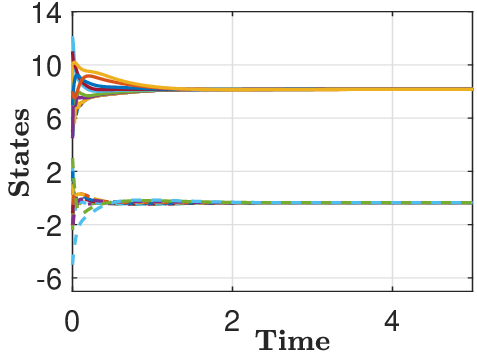}
        \caption{Laplacian flow}
        \label{fig:L_sim signed_graph}
        \end{subfigure}
    \hspace{0.5cm} % Adjust the space between the two figures
    \begin{subfigure}{0.2\textwidth}
    \centering
\includegraphics[scale=0.5]{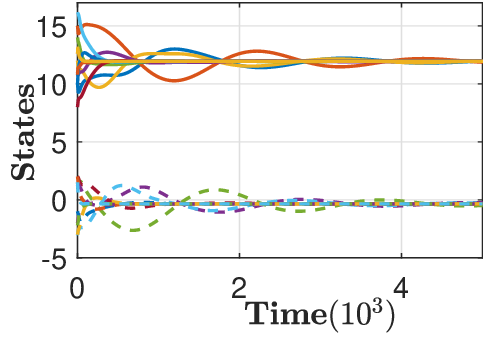}
        \caption{Pseudoinverse flow}
        \label{fig:Linv_sim signed}
          \end{subfigure}     
    \raggedright
    \caption{\footnotesize Consensus achieved by flow systems for the IEEE 10 node network in \cite{zimmerman2010matpower}. Solid and dashed lines denote real and imaginary trajectories, respectively. The notation $10^3$ in (b) represents the time-scale.} 
 \label{fig:L_sim signed}
\end{figure}
\begin{figure}[H]
    \begin{subfigure}{0.2\textwidth}
        \centering
        \includegraphics[scale=0.19]{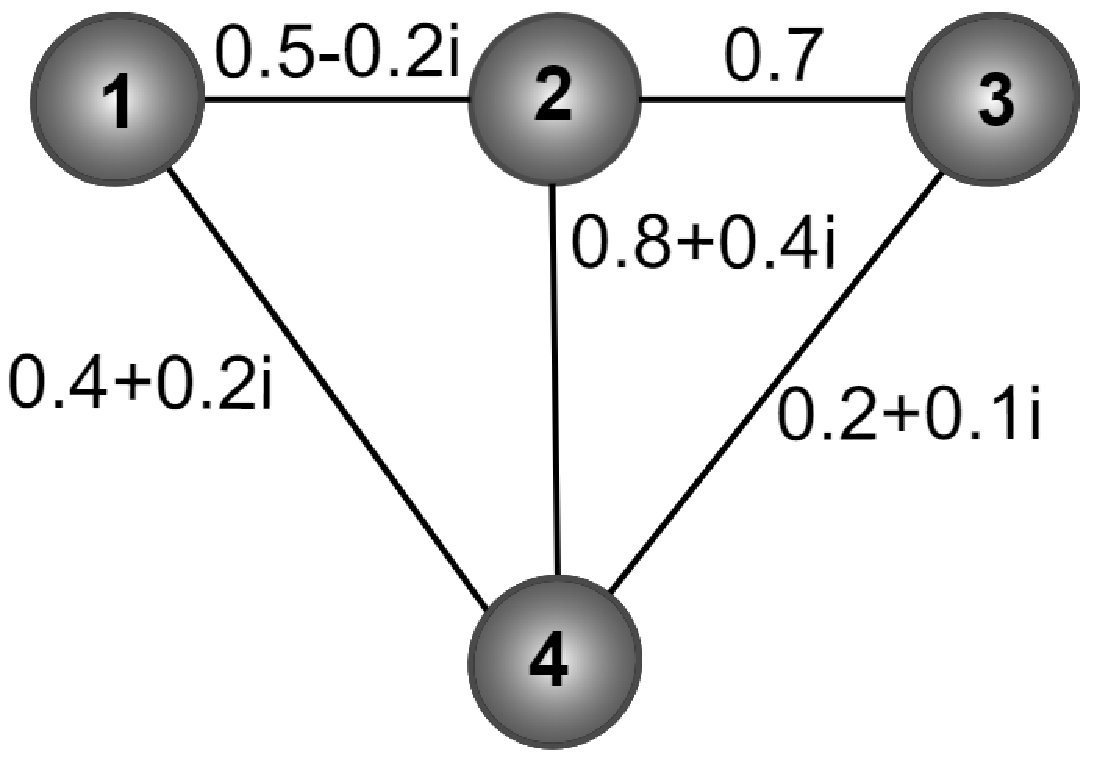}
        \caption{modified impedance network in Fig.\ref{fig:z n/w}}
        \label{fig:RLgraph}
        \end{subfigure}
    \hspace{0.35cm} % Adjust the space between the two figures
    \begin{subfigure}{0.2\textwidth}
    \centering
    \includegraphics[scale=0.52]{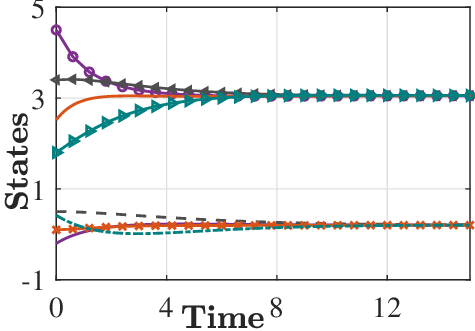}
        \caption{Pseudoinverse flow }
        \label{fig:RL_sim}
          \end{subfigure}     
          \raggedright
    \caption{\footnotesize Consensus achieved by the psuedoinverse Laplacian flow system indicating equal values of current after a transient period in a power network. The exact values of $e^{-L^\dagger t}$ are omitted for brevity.}
    \label{fig:RLsim}
\end{figure}
 Second, we consider the benchmark IEEE power network described in Ex.\ref{ex:signed} and an impedance network in Ex.\ref{ex:RLexample}. We already show that the Laplacian and its pseudoinverse follow rEEP property in Fig.\ref{figs:case10 realEEP}. This implies consensus in both the flows shown in Fig.\ref{fig:L_sim signed} which validates our result in the given real-world power network (as proven in Thm.\ref{Thm:consensus}). Further, Fig.\ref{fig:RLgraph} presents the graph of the impedance network in Fig.\ref{fig:z n/w}. The corresponding pseudoinverse Laplacian flow is simulated in Fig.\ref{fig:RL_sim} which converges to consensus.
\section{CONCLUSIONS AND FUTURE WORK }
We introduce the pseudoinverse Laplacian flows and extend the concept of \emph{real eventually exponentially positivity} (rEEP) to the pseudoinverses of Laplacians in complex-valued networks. Surprisingly, for both undirected and weight-balanced directed graphs, we show that a Laplacian satisfies rEEP if and only if its pseudoinverse does. As a consequence of the rEEP property, both Laplacian and the pseudoinverse Laplacian flows achieve consensus. 
%If we relax the assumption of weight-balance in digraphs, then the study of pseudoinverse Laplacian flows needs exploration.
However, the converge properties of the Laplacian and the pseudoinverse Laplacian flows need further investigation outside of the class of non weight-balanced digraphs. In future work, we plan to investigate whether these assertions apply to all signed digraphs.
%\addtolength{\textheight}{-12cm}   % This command serves to balance the column lengths
                                  % on the last page of the document manually. It shortens
                                  % the textheight of the last page by a suitable amount.
                                  % This command does not take effect until the next page
                                  % so it should come on the page before the last. Make
       
                                % sure that you do not shorten the textheight too much.
\bibliography{References}

\begin{thebibliography}{10}

\bibitem{gates2019nature}
Alexander~J Gates, Qing Ke, Onur Varol, and Albert-L{\'a}szl{\'o} Barab{\'a}si.
\newblock Nature’s reach: narrow work has broad impact.
\newblock {\em Nature}, 575(7781):32--34, 2019.

\bibitem{dorfler2012synchronization}
Florian Dorfler and Francesco Bullo.
\newblock Synchronization and transient stability in power networks and nonuniform kuramoto oscillators.
\newblock {\em SIAM Journal on Control and Optimization}, 50(3):1616--1642, 2012.

\bibitem{olfati2007consensus}
Reza Olfati-Saber, J~Alex Fax, and Richard~M Murray.
\newblock Consensus and cooperation in networked multi-agent systems.
\newblock {\em Proceedings of the IEEE}, 95(1):215--233, 2007.

\bibitem{degroot1974reaching}
Morris~H DeGroot.
\newblock Reaching a consensus.
\newblock {\em Journal of the American Statistical association}, 69(345):118--121, 1974.

\bibitem{bullo2018lectures}
Francesco Bullo.
\newblock {\em Lectures on network systems}, volume~1.
\newblock CreateSpace, 2018.

\bibitem{hu2024best}
Shuyue Hu, Harold Soh, and Georgios Piliouras.
\newblock The best of both worlds in network population games: Reaching consensus and convergence to equilibrium.
\newblock {\em Advances in Neural Information Processing Systems}, 36, 2024.

\bibitem{zhang2020consensus}
Bowen Zhang, Yucheng Dong, Hengjie Zhang, and Witold Pedrycz.
\newblock Consensus mechanism with maximum-return modifications and minimum-cost feedback: A perspective of game theory.
\newblock {\em European Journal of Operational Research}, 287(2):546--559, 2020.

\bibitem{cetinay2016topological}
Hale Cetinay, Fernando~A Kuipers, and Piet Van~Mieghem.
\newblock A topological investigation of power flow.
\newblock {\em IEEE Systems Journal}, 12(3):2524--2532, 2016.

\bibitem{romeres2013novel}
Diego Romeres, Florian D{\"o}rfler, and Francesco Bullo.
\newblock Novel results on slow coherency in consensus and power networks.
\newblock In {\em 2013 European Control Conference (ECC)}, pages 742--747. IEEE, 2013.

\bibitem{de2018power}
Claudio De~Persis, Erik~RA Weitenberg, and Florian D{\"o}rfler.
\newblock A power consensus algorithm for {DC} microgrids.
\newblock {\em Automatica}, 89:364--375, 2018.

\bibitem{muranova2020electrical}
Anna Muranova and Robert Schippa.
\newblock Eigenvalues of the normalized complex {L}aplacian on finite electrical networks.
\newblock {\em arXiv preprint arXiv:2012.12759}, 2020.

\bibitem{song2019extension}
Yue Song, David~J Hill, and Tao Liu.
\newblock On extension of effective resistance with application to graph {L}aplacian definiteness and power network stability.
\newblock {\em IEEE Transactions on Circuits and Systems I: Regular Papers}, 66(11):4415--4428, 2019.

\bibitem{liu2015hermitian}
Jianxi Liu and Xueliang Li.
\newblock Hermitian-adjacency matrices and hermitian energies of mixed graphs.
\newblock {\em Linear Algebra and its Applications}, 466:182--207, 2015.

\bibitem{bottcher2024complex}
Lucas B{\"o}ttcher and Mason~A Porter.
\newblock Complex networks with complex weights.
\newblock {\em Physical Review E}, 109(2):024314, 2024.

\bibitem{dong2014complex}
Jiu-Gang Dong and Li~Qiu.
\newblock Complex {L}aplacians and applications in multi-agent systems.
\newblock {\em arXiv preprint arXiv:1406.1862}, 2014.

\bibitem{DONG20161}
Jiu-Gang Dong and Lin Lin.
\newblock Laplacian matrices of general complex weighted directed graphs.
\newblock {\em Linear Algebra and its Applications}, 510:1--9, 2016.

\bibitem{tian2024spreading}
Yu~Tian and Renaud Lambiotte.
\newblock Spreading and structural balance on signed networks.
\newblock {\em SIAM Journal on Applied Dynamical Systems}, 23(1):50--80, 2024.

\bibitem{saxena2024realeventualexponentialpositivity}
Aditi Saxena, Twinkle Tripathy, and Rajasekhar Anguluri.
\newblock Real eventual exponential positivity of complex-valued laplacians: Applications to consensus in multi-agent systems.
\newblock {\em arXiv:2410.13700}, 2024.

\bibitem{fontan2021properties}
Angela Fontan and Claudio Altafini.
\newblock On the properties of {L}aplacian pseudoinverses.
\newblock In {\em 2021 60th IEEE Conference on Decision and Control (CDC)}, pages 5538--5543. IEEE, 2021.

\bibitem{fontan2023pseudoinverses}
Angela Fontan and Claudio Altafini.
\newblock Pseudoinverses of signed {L}aplacian matrices.
\newblock {\em SIAM Journal on Matrix Analysis and Applications}, 44(2):622--647, 2023.

\bibitem{altafini2019investigating}
Claudio Altafini.
\newblock Investigating stability of {L}aplacians on signed digraphs via eventual positivity.
\newblock In {\em 2019 IEEE 58th Conference on Decision and Control (CDC)}, pages 5044--5049. IEEE, 2019.

\bibitem{horn2012}
Roger~A Horn and Charles~R Johnson.
\newblock {\em Matrix analysis}.
\newblock Cambridge university press, 2012.

\bibitem{varga2012}
Dimitrios Noutsos and Richard~S Varga.
\newblock On the {P}erron--{F}robenius theory for complex matrices.
\newblock {\em Linear algebra and its applications}, 437(4):1071--1088, 2012.

\bibitem{2012moore}
Jo{\~a}o Carlos~Alves Barata and Mahir~Saleh Hussein.
\newblock The {M}oore--{P}enrose pseudoinverse: A tutorial review of the theory.
\newblock {\em Brazilian Journal of Physics}, 42:146--165, 2012.

\bibitem{li2017consensus}
Qiao Li, David~Wenzhong Gao, Huaguang Zhang, Ziping Wu, and Fei-yue Wang.
\newblock Consensus-based distributed economic dispatch control method in power systems.
\newblock {\em IEEE {T}ransactions on {S}mart {G}rid}, 10(1):941--954, 2017.

\bibitem{van2017pseudoinverse}
Piet Van~Mieghem, Karel Devriendt, and H~Cetinay.
\newblock Pseudoinverse of the laplacian and best spreader node in a network.
\newblock {\em Physical Review E}, 96(3):032311, 2017.

\bibitem{zimmerman2010matpower}
Ray~Daniel Zimmerman, Carlos~Edmundo Murillo-S{\'a}nchez, and Robert~John Thomas.
\newblock Matpower: Steady-state operations, planning, and analysis tools for power systems research and education.
\newblock {\em IEEE {T}ransactions on {P}ower {S}ystems}, 26(1):12--19, 2010.

\end{thebibliography}
\bibliographystyle{unsrt}
\end{document}